\newcommand{\p}{\phi}
\newcommand{\pc}{\phi^*}
\newcommand{\vp}{\varphi}
\newcommand{\vt}{\vartheta}
\newcommand{\ve}{\varepsilon}
\newcommand{\Lag}{\mathcal{L}(x,\p,\partial\p)}
\newcommand{\Lageps}{\mathcal{L}(x,\p_\ve,\partial\p_\ve)}
\title{Conserved currents from nonlocal constants in relativistic scalar field theories}
\author{Mattia Scomparin}
\begin{document}
\maketitle

\section{\label{sec:intro}Introduction}

A fundamental and common idea in modern physics is to describe the foundation of a theory via its Lagrangian. Within the framework of \textit{scalar field theories} embedded in a flat $D$-dimensional spacetime $\mathcal{M}^D$\!, a \textit{Lagrangian density} is a scalar, relativistically invariant, smooth function
\begin{equation}\label{eq:lag}
\Lag\,,
\end{equation}
with $\p(x)$ a scalar field that depends on spacetime events $x\in\Omega\subset\mathcal{M}^D\!$.  
By standard arguments in \textit{Calculus of Variations}, a fixed-endpoint, stationarizing motion for the action  $S\equiv\int d^D\!x\,\mathcal{L}$, called ``geodesic'' flow, is a solution $x\mapsto \p(x)$ to the \textit{Euler-Lagrange} field equation
\begin{equation}\label{eq:eom}
\frac{\delta\Lag}{\delta\p}-\partial_\mu\frac{\delta\Lag}{\delta\partial_\mu\p}=0\,.
\end{equation}
Here, the $\delta/\delta\p$ and $\delta/\delta\partial\p$ symbols stand for functional derivatives, whereas we use  $\partial_\mu\equiv\partial/\partial x^\mu$ for partial derivatives with respect to the spacetime coordinate $x^\mu$. 
Later on, Einstein's convention on dummy indices is assumed.
We also define $\eta_{\mu\nu}$ as the generalized, $D$-dimensional, Minkowski metric for $\mathcal{M}^D$.

A conserved current for theory \eqref{eq:lag} is a $D$-vector
\begin{equation}\label{eq:curr}
j^\mu(x,\p,\partial\p)\,,
\end{equation}
that is conserved along the solutions of field equation \eqref{eq:eom}, namely that satisfies the continuity equation 
\begin{equation}
\partial_\mu j^\mu(x,\p,\partial\p)=0\quad \textnormal{for all } x\in\Omega.
\end{equation}
In practical terms, finding conserved currents is one of the best approaches to understand the local/global behavior of a theory and provide insights to solve field equations.  
A beautiful and celebrated result in this respect is what is nowadays referred to as \textit{Noether's Theorem} \cite{Noether:1918,Rosen:1972ku}, which captures the strong connection between conserved currents and continuous symmetries of Lagrangian densities. 
Despite being a very powerful tool, Noether's Theorem assumes that theory's invariance must be satisfied at each point in field space, but finding symmetries is not always simple when we come to highly complicated Lagrangian densities.

Exploring systems with finite degrees of freedom, Gorni and Zampieri revisited Noether's Theorem in Ref. \cite{Zampieri:2013ziw} introducing the concept of \textit{nonlocal constants of motion}, that are functions that look like this:
\begin{equation}\label{eq:nonlocGZ}
\frac{\partial L(t,q(t),\dot{q}(t))}{\partial \dot{q}}\cdot \frac{\partial q_\ve(t)}{\partial \ve}\bigg\rvert_{\ve=0}-\int_{t_0}^t 
\frac{\partial L(s,q_\ve(s),\dot{q}_\ve(s))}{\partial\ve}\bigg\rvert_{\ve=0}ds\,,
\end{equation}
where $L(t, q, \dot{q})$ is a Lagrangian, $q,\dot{q}=dq/dt\in\mathbb{R}^n$ are generalized coordinates, and $q_\ve\in\mathbb{R}^n$ is an arbitrary perturbed motion depending on $\ve\in\mathbb{R}$. On technical grounds, expression \eqref{eq:nonlocGZ} is constant along solutions of equation \eqref{eq:eom} without necessarily requiring invariance proprieties of the Lagrangian. However, on the other hand, it is nonlocal since its value at time $t$ also depends on the whole history of the motion between $t_0$ and $t$. This result was successfully applied by Gorni and Zampieri in Refs. \cite{gorni2017,gorni2019} to study some relevant mechanical systems, which include: (i) homogeneous potentials of degree $k=-2$, (ii) mechanical systems with viscous fluid resistance, (iii) mechanical system with hydraulic (quadratic) fluid resistance, and (iv) conservative and dissipative Maxwell-Bloch equations of laser dynamics.
Beside that, Scomparin's generalization of nonlocal constants to higher-order Lagrangian systems $L(t,q,\dot{q},\ddot{q},...)$ appeared in Ref. \cite{scomparin2021}, with applications covering the Pais-Uhlenbeck oscillator and some models of modified gravity. In any case, standard Noetherian results were recovered.

Noether's work was originally motivated by \emph{General Relativity} \cite{Noether:1918}, and to this day some of the  most relevant applications of the theorem are still in field theories.
It is therefore questionable if the approach designed in Ref. \cite{Zampieri:2013ziw} could be extended to find (integral) nonlocal constants and conserved currents in such framework. In particular, since field theories like \eqref{eq:lag} appear in a very wide class of models for cosmological theory \cite{Arbey:2019cpf}, quantum theory \cite{Peskin:1995ev, Moshe:2003, Weinberg:1995mt}, thermodynamics \cite{Ballesteros:2016kdx}, and statistical physics \cite{TakashiHara}, an in-depth investigation is strongly motivated. 
On the other hand, approaching Noether's Theorem from a nonlocal point of view could provide new perspectives to find solutions. In fact, Euler-Lagrange field equations \eqref{eq:eom} often turn out to be non-linear Partial Differential Equations (PDE) hard to be solved. This goal was anticipated by the author in a recent review \cite{GORNI2022100262}.

It is important to mention that nonlocal constants and nonlocal symmetries in field theory entered the literature through some works of Peter Olver \cite{10.1007/978-94-007-1023-8_8}, Kossman-Schwarzbach \cite{dddddddd}, L. Dolan and A. Roos \cite{PhysRevD.22.2018}, and others.

The central purpose of this paper is to extend, in a systematic way,  the nonlocal approach developed by Gorni and Zampieri \cite{Zampieri:2013ziw} and Scomparin \cite{scomparin2021} to the framework of field theories. Since their nonlocal approach has not been implemented in such framework until now, we focus this groundwork study on canonical scalar field theories  \eqref{eq:lag} only, being their formulation sufficiently simple that all the relevant applications can be easily derived, recovered, and fully analyzed. Indeed, as mentioned, scalar fields appear in a wide range of well-functioning models, making the choice not too restrictive to get useful results. Generalizations to more sophisticated field theories are left to future investigation. 

The structure of the paper is organized as follows. The main, very simple result of our work is presented in Sec. \ref{sec:nonloconst}, where we deduce Theorem \ref{teo:nonloconst} to obtain nonlocal constants from Lagrangian densities like \eqref{eq:lag}. We find nonlocal constants that are the covariant transposition of expression \eqref{eq:nonlocGZ}, where a volume integral on $\Omega$ appears in the second term as a nonlocal contribute with respect to a surface integration in the first one. 
In this context, perturbed fields are treated as spacetime and field transformations.

Generally, nonlocal constants are trivial or of no apparent practical value \cite{gorni2017, scomparin2021}. However, there are few and precious Lagrangian densities for which  Theorem  \ref{teo:nonloconst} does yield  conserved currents, in the sense of local functions like \eqref{eq:curr} whose total divergence is zero. Indeed, inspired by previous studies \cite{Zampieri:2013ziw,gorni2017, scomparin2021}, our Theorem \ref{teo:NNN} in Sec. \ref{sec:conscurr} exhibits that an interesting interplay between nonlocal constants and conserved currents exists. More specifically, an unifying scheme emerges every time a Lagrangian density, once evaluated on a well thought perturbed field, satisfies what we call the ``total divergence condition''. Hence, driven by a symmetry-oriented approach, we provide a set of useful theorems giving conserved currents from nonlocal constants by using spacetime transformations (Subsec. \ref{sec:1}), internal transformations (Subsec. \ref{sec:2}), mixed spacetime-internal transformations (Subsec.s \ref{sec:3}, \ref{sec:4}, \ref{sec:6}), and finite invariances (Subsec. \ref{sec:5}) as perturbed fields. Some of them will be familiar from the literature \cite{Zampieri:2013ziw,gorni2017, scomparin2021}, others have been found, case by case, by inspection of the Lagrangian density taken into consideration. 
In particular, in Subsec. \ref{sec:1} Theorem \ref{teo:trasl} deals with spacetime-independent theories and recovers the well-known conservation of the \textit{Energy-Momentum tensor} $T^\mu_\nu$. In Subsec. \ref{sec:2}, Theorem \ref{teo:ssf}  gets conserved currents from Lagrangian densities such that $\delta \mathcal{L}/\delta\partial_\mu\p\propto \partial^\mu(\delta\mathcal{L}/\delta\p)$.  
In Subsec. \ref{sec:3} our Corollary \ref{teo:ssf3} considers theories whose conserved currents are generated by the Energy-Momentum tensor. In Subsec. \ref{sec:4}, Theorem \ref{teo:hom2} shows that two  conserved currents involving $T^\mu_\nu$ exist for Lagrangian densities of the form $\mathcal{L}=\tfrac{1}{2}\partial_\mu\p\,\partial^\mu\p-U(\p)$ if $U(\p)$ is a \textit{homogeneous} potential of degree $k=D\Delta_D^{-1}$. In this regard we also find equation \eqref{eq:DIST}, which exactly gives how the distance from the origin $|\p|$ depends on spacetime, even though we don't know the shape of the solution. In Subsec. \ref{sec:6},  Theorem \ref{teo:diss} yields a conserved current for dissipative theories of the form $\mathcal{L}=e^{\,h\, x}\big[\tfrac{1}{2}\partial_\mu\p\,\partial^\mu\p-U(\p)\big]$, with $U(\p)$ a homogeneous  potential of degree $k=2$. Finally, in Subsec. \ref{sec:5} our last Theorem \ref{teo:J} considers finite invariances of Lagrangian densities, i.e. a generalization of Noetherian symmetries, to generate conserved currents. 

Our theorems consistently generalize a set of well-known results obtained in Refs \cite{Logan:1977, Zampieri:2013ziw, gorni2017, scomparin2021} for mechanical systems with finite degrees of freedom. In particular, our theorems successfully recover such results by imposing $D=1$.

In the rest of the paper, Sec. \ref{sec:app}, we deal with some neat applications of
our theorems to relevant theories involving scalar fields. 
In Subsec. \ref{sub:KG} we consider the \textit{nonlinear interacting (real) theory} framework, which includes, among all, (i) the free massless theory \cite{Faber:2001dv}, (ii) the free Klein-Gordon theory \cite{Chiarelli:2017rdc}, and (iii) the interacting $\p^4$-theory \cite{Brandt:1975xn}, as special cases.  
First, we exhibit that a random choice of perturbed fields that satisfy the total divergence condition easily results conserved currents that are trivial. Then, we use the theorems designed in the previous section to recover a set of well-known Noetherian results from a nonlocal point of view. Interestingly, Theorem \ref{teo:hom2} applied to the massless $\p^4$-theory returns a very useful conserved current \eqref{eq:homHfdfd} that returns expression  \eqref{eq:DISTmass}, which gives the spacetime dependence of $|\p|$.

In Subsec. \ref{sub:CKG} we move to the \textit{nonlinear interacting (complex) theory} with $\p\in\mathbb{C}$.
Generally, complex scalar fields represent spin-0 particles and antiparticles with electric charge or other properties.
Again, treating $\p$ and its complex-conjugate $\pc$ as two independent fields, the full consistency of our machinery is confirmed by the recovered currents.
In particular, such currents generalize the results of Subsec. \ref{sub:KG}, and expression \eqref{eq:homHfdfdC} is the analogous of \eqref{eq:DISTmass}.

Subsec. \ref{sec:dissKG} is dedicated to apply our Theorem \ref{teo:diss} to the \textit{dissipative (real) Klein-Gordon theory}, where a multiplicative exponential damping factor modifies the usual free Klein-Gordon Lagrangian density. This example leads to the  conserved current \eqref{eq:dissjs}.
Finally, we notice that in Subsec. \ref{sub:KG} Theorem \ref{teo:trasl} and Theorem \ref{teo:ssf} result the same conserved quantity for the free Klein-Gordon theory. In Subsec. \ref{sub:spacetimeex}, we exhibit how to remove such degeneracy by considering a theory where the Lagrangian coefficients are not constant but have an explicit dependence on spacetime. In such case, between the two theorems, only Theorem \ref{teo:ssf} can be applied since the theory is not spacetime-independent.

As a final remark, we stress that nonlocal constants and corresponding conserved currents are a valuable tool also for studying models that, like the free  Klein-Gordon theory, easily exhibit a general solution without requiring a conservation law. Under this sense, as already analyzed in recent works (see e.g. Refs.  \cite{gorni2019,Kaparulin}), we believe our results could provide a valuable instrument to give a novel insight into stability proprieties of scalar field theories and boundedness of related solutions. We leave a complete analysis of these issues for future investigation, here we just focus on the formal implementation of our nonlocal approach.


\section{Nonlocal constants}\label{sec:nonloconst}

In this section, we introduce the notion of perturbed fields and outline how nonlocal constants can be obtained in scalar field theories.
For our purposes, the subset $\Omega\subset\mathcal{M}^D$ is supposed to be compact and with a smooth boundary $\partial\Omega$. We assume that $\partial\Omega$ is characterized by an outward pointing unit normal $n_\mu$ at each point.
\begin{definition}\label{defin:family}
Given a solution $x\mapsto \p(x)$ of the field equation \eqref{eq:eom}, a one parameter family of perturbed fields, or simply a perturbed field, associated to $\p(x)$ is a smooth function $(\ve,x)\mapsto \p_\ve(x)$ with $\ve\in \mathbb{R}$ in a neighbourhood of $0$, and such that $\p_0(x)=\p(x)$.
\end{definition}

By adopting the standard Noetherian terminology used in field theories, perturbed fields can be classified in \textit{spacetime families}, that act  on spacetime only, and \textit{internal families}, that transform the fields into each other in some way without making reference to their dependence on spacetime. Among all, elementary perturbed fields are the \textit{spacetime-shift family} $\p_\ve(x)=\p(x+\ve f(x))$, and the \textit{internal-shift family} $\p_\ve(x)=\p(x)+\ve g(x)$, with $f$ and $g$ free smooth functions of $x\in\Omega$.
%
\begin{theorem}\label{teo:nonloconst}
Let $x\mapsto\p(x)$ be a solution of the field equation \eqref{eq:eom} and let $\p_\ve(x)$ be a perturbed field associated to $\p(x)$. Then, the following function of $x\in\Omega$ is constant 
\begin{equation}\label{eq:nonloconst}
\oint_{\partial\Omega}\!d^{D-1}x\, n_\mu\frac{\delta \Lag}{\delta\,\partial_\mu\p}\,\frac{\partial\p_\ve}{\partial\ve}\bigg\rvert_{\ve=0}
-\int_{\Omega}\!d^D\!x\,\frac{\partial \Lageps}{\partial\ve}\bigg\rvert_{\ve=0}\,.
\end{equation}
\end{theorem}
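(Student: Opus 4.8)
The plan is to differentiate the action-type quantity associated with the perturbed field and use the Euler–Lagrange equations together with the divergence theorem on $\Omega$. Concretely, I would first observe that the statement ``expression \eqref{eq:nonloconst} is constant as a function of $x\in\Omega$'' is really a statement about varying the domain: the first term is a surface integral over $\partial\Omega$ and the second a volume integral over $\Omega$, so one should regard $\Omega$ as ranging over a nested family of subdomains and show the total $x$-dependence drops out. The cleanest route is to show that the integrand of the boundary term, once rewritten via Stokes' theorem as a volume integral, exactly cancels the integrand of the bulk term pointwise on any solution $\p(x)$.

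The key computation is the following. Since $\p_\ve$ is smooth in $(\ve,x)$, I may differentiate under the integral sign and interchange $\partial/\partial\ve$ with $\partial_\mu$. Write $\psi(x)\equiv \partial_\ve\p_\ve(x)\big|_{\ve=0}$. Differentiating $\Lageps$ in $\ve$ at $\ve=0$ by the chain rule gives
\begin{equation}
\frac{\partial\Lageps}{\partial\ve}\bigg\rvert_{\ve=0}=\frac{\delta\Lag}{\delta\p}\,\psi+\frac{\delta\Lag}{\delta\,\partial_\mu\p}\,\partial_\mu\psi\,.
\end{equation}
On a solution of \eqref{eq:eom} the first term equals $\big(\partial_\mu\tfrac{\delta\Lag}{\delta\,\partial_\mu\p}\big)\psi$, so the right-hand side becomes the total divergence $\partial_\mu\!\big(\tfrac{\delta\Lag}{\delta\,\partial_\mu\p}\,\psi\big)$. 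Integrating over $\Omega$ and applying the divergence theorem,
\begin{equation}
\int_\Omega d^D\!x\,\frac{\partial\Lageps}{\partial\ve}\bigg\rvert_{\ve=0}=\int_\Omega d^D\!x\,\partial_\mu\!\Big(\frac{\delta\Lag}{\delta\,\partial_\mu\p}\,\psi\Big)=\oint_{\partial\Omega}d^{D-1}x\,n_\mu\,\frac{\delta\Lag}{\delta\,\partial_\mu\p}\,\psi\,.
\end{equation}
Hence the two terms in \eqref{eq:nonloconst} are equal for every admissible $\Omega$, so their difference is identically zero — in particular constant. (If one prefers to keep the more general reading in which the ``constant'' is allowed to be a nonzero number depending on the initial surface, the same identity shows the difference is independent of how $\partial\Omega$ is deformed, which is the field-theoretic analogue of the $t$-independence in \eqref{eq:nonlocGZ}.)

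The main obstacle is not the algebra but making the geometric setup precise: one must be careful that the functional derivatives $\delta\mathcal{L}/\delta\p$ and $\delta\mathcal{L}/\delta\,\partial_\mu\p$ reduce, for a first-order Lagrangian density, to ordinary partial derivatives so that the chain rule above is literally valid, and that the orientation and unit-normal conventions on $\partial\Omega$ make the divergence theorem come out with the stated sign. I would also note explicitly that smoothness of $(\ve,x)\mapsto\p_\ve(x)$ on the compact set $\Omega$ justifies all the differentiations under the integral sign. Once these conventions are fixed, the proof is the two displayed lines above.
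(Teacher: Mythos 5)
Your proposal is correct and follows essentially the same route as the paper's own proof: differentiate $\Lageps$ in $\ve$ by the chain rule, use the Euler--Lagrange equation \eqref{eq:eom} to collapse the result into the total divergence $\partial_\mu\big(\tfrac{\delta\mathcal{L}}{\delta\partial_\mu\p}\,\partial_\ve\p_\ve|_{\ve=0}\big)$, and apply the Gauss divergence theorem over $\Omega$. Your added remark that the two terms are in fact equal (so the ``constant'' is zero for a fixed $\Omega$, and the content lies in independence under deformation of the domain) is a fair clarification of the statement, but it does not change the argument.
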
 
%
\begin{proof}
Take the derivative of $\Lageps$ with respect to $\ve$ at $\ve=0$, and use $\partial (\partial_\mu\p_\ve)/\partial\ve\rvert_{\ve=0}=\partial_\mu(\partial \p_\ve/\partial\ve)\rvert_{\ve=0}$ to get
\begin{equation}\label{eq:manip}
\begin{split}
\frac{\partial \Lageps}{\partial\ve}\bigg\rvert_{\ve=0}&=\frac{\delta \mathcal{L}}{\delta\p}\frac{\partial\p_\ve}{\partial\ve}\bigg\rvert_{\ve=0}+\frac{\delta \mathcal{L}}{\delta\partial_\mu\p}\partial_\mu\frac{\partial\p_\ve}{\partial\ve}\bigg\rvert_{\ve=0}\\
&=\partial_\mu \left[\frac{\delta \mathcal{L}}{\delta\partial_\mu\p}\frac{\partial\p_\ve}{\partial\ve}\bigg\rvert_{\ve=0}\right]+\left[\frac{\delta\mathcal{L}}{\delta\p}-\partial_\mu\frac{\delta\mathcal{L}}{\delta\partial_\mu\p}\right]\frac{\partial\p_\ve}{\partial\ve}\bigg\rvert_{\ve=0}\,.
\end{split}
\end{equation}
Using equation \eqref{eq:eom}, the second term on the right-hand-side of \eqref{eq:manip} disappears. The final result \eqref{eq:nonloconst} is obtained by integrating in the whole $\Omega$ domain and applying the covariant Gauss divergence theorem.
\end{proof}

Following the same nomenclature recommended by Gorni and Zampieri \cite{Zampieri:2013ziw} and by Scomparin \cite{scomparin2021}, expression \eqref{eq:nonloconst} will be called \textit{nonlocal constant} associated to $\p_\ve$. Indeed, a volume integral on $\Omega$ appears in the second term as a nonlocal contribute with respect to a surface integration in the first one. 
It should be noticed that expression \eqref{eq:nonloconst} is the covariant transposition of the result \eqref{eq:nonlocGZ} proposed by Gorni and Zampieri in Ref. \cite{Zampieri:2013ziw}, where the time is replaced by spacetime events, and $q(t)$ is replaced by the value of the field $\p(x)$ at that point in spacetime. In contrast, an integral structure appears in both the terms since the Lagrangian density is now embedded in a $D$-dimensional spacetime.

In Theorem \ref{teo:nonloconst} the one parameter family of perturbed fields can be chosen randomly, providing generally trivial or of no apparent practical interest nonlocal constants. 
This evidence will be clear when we will address some applications in Sec. \ref{sec:app}.
However, there exist well-thought cases where expression \eqref{eq:nonloconst} returns a local conserved current like \eqref{eq:curr}, as we will see in the next Sec. \ref{sec:conscurr}.


\section{\label{sec:conscurr}Conserved currents}

The following section describes particular cases for which Theorem \ref{teo:nonloconst} yields conserved currents for equation \eqref{eq:eom}.
In accordance with previous works (see e.g. Refs. \cite{Zampieri:2013ziw,scomparin2021}), this purpose justifies the following terminology:
\begin{definition}\label{def:totdiv}
Let $\p_\ve$ be a perturbed field associated to a solution $x\mapsto \p(x)$ of the field equation \eqref{eq:eom}. We say that $\p_\ve$ satisfies the ``total divergence condition'' with the $D$-vector $\psi^\mu(x,\p,\partial\p)$ if
\begin{equation}\label{eq:totdiv}
\frac{\partial \Lageps}{\partial\ve}\bigg\rvert_{\ve=0}=\partial_\mu\psi^\mu(x,\p,\partial\p)\quad \textnormal{for all } x\in\Omega.
\end{equation}
\end{definition}
Definition \ref{def:totdiv} refers to solution of equation \eqref{eq:eom} only. In principle one could define a total divergence condition considering a general $\p(x)$, whether it solves Euler-Lagrange equations \eqref{eq:eom} or not, but this is not our case.

The condition \eqref{eq:totdiv} becomes interesting when applied to Theorem \ref{teo:nonloconst}.

\begin{theorem}\label{teo:NNN}
Let $x\mapsto \p(x)$ be a solution of the field equation \eqref{eq:eom}, and let $\p_\ve$ be a perturbed field associated to $\p(x)$ satisfying the total divergence condition \eqref{eq:totdiv} for some $\psi^\mu(x,\p,\partial\p)$. Then, $j^\mu(x,\p,\partial\p)$ is a conserved current for equation \eqref{eq:eom}, with
\begin{equation}\label{eq:jtotdiv}
j^\mu(x,\p,\partial\p)=\frac{\delta \Lag}{\delta\,\partial_\mu\p}\,\frac{\partial\p_\ve}{\partial\ve}\bigg\rvert_{\ve=0}-\psi^\mu(x,\p,\partial\p)\quad \textnormal{for all } x\in\Omega.
\end{equation}
\end{theorem}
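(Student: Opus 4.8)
The plan is to combine Theorem~\ref{teo:nonloconst} with Definition~\ref{def:totdiv} directly. First I would recall that, by Theorem~\ref{teo:nonloconst}, for any perturbed field $\p_\ve$ the function of $x\in\Omega$ given by \eqref{eq:nonloconst} is constant. The key observation is that, on the subdomain where the total divergence condition \eqref{eq:totdiv} holds (here, all of $\Omega$), the volume integrand in \eqref{eq:nonloconst} is itself a total divergence, so that the covariant Gauss theorem turns the second term into a surface integral as well. Concretely, $\int_\Omega d^D x\,\partial_\mu\psi^\mu = \oint_{\partial\Omega} d^{D-1}x\, n_\mu\,\psi^\mu$, and the nonlocal constant collapses to a single boundary integral
\begin{equation*}
\oint_{\partial\Omega}\!d^{D-1}x\, n_\mu\,j^\mu(x,\p,\partial\p),
\end{equation*}
with $j^\mu$ as defined in \eqref{eq:jtotdiv}.

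Next I would exploit the arbitrariness of the compact region $\Omega$. Since $j^\mu$ is a local function of $x,\p,\partial\p$ and the identity above must hold for every admissible $\Omega$ (with the constant on the left depending a priori on $\Omega$ but, being constant along the motion, actually reproducing the same value), the flux $\oint_{\partial\Omega} n_\mu j^\mu$ is independent of $\Omega$ in the appropriate sense. The cleanest route is to run the computation in \eqref{eq:manip} again but now without integrating: differentiating $\Lageps$ in $\ve$ at $\ve=0$, using the Euler--Lagrange equation \eqref{eq:eom} to kill the second bracket, gives pointwise
\begin{equation*}
\frac{\partial \Lageps}{\partial\ve}\bigg\rvert_{\ve=0}=\partial_\mu\!\left[\frac{\delta \mathcal{L}}{\delta\partial_\mu\p}\frac{\partial\p_\ve}{\partial\ve}\bigg\rvert_{\ve=0}\right].
\end{equation*}
Equating this with the total divergence condition \eqref{eq:totdiv}, $\partial \Lageps/\partial\ve|_{\ve=0}=\partial_\mu\psi^\mu$, and subtracting yields $\partial_\mu j^\mu = 0$ for all $x\in\Omega$, which is exactly the continuity equation for \eqref{eq:eom}.

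I would present the argument in the second, pointwise form, since it is shorter and makes the role of \eqref{eq:eom} transparent; the boundary-integral heuristic via Theorem~\ref{teo:nonloconst} is worth a sentence of motivation but is not logically needed. The only point requiring a little care — the ``main obstacle,'' such as it is — is the interchange $\partial(\partial_\mu\p_\ve)/\partial\ve|_{\ve=0}=\partial_\mu(\partial\p_\ve/\partial\ve)|_{\ve=0}$, which is legitimate by the smoothness of $(\ve,x)\mapsto\p_\ve(x)$ assumed in Definition~\ref{defin:family}; this is the same step already used in the proof of Theorem~\ref{teo:nonloconst}, so no new work is needed. Everything else is substitution and cancellation.
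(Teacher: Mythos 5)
Your argument is correct, and your preferred pointwise presentation is a genuinely different, tighter route than the paper's. The paper works at the integral level: it substitutes the total divergence condition \eqref{eq:totdiv} into the nonlocal constant \eqref{eq:nonloconst}, converts the volume term into a boundary flux via the covariant Gauss theorem, and reads off $j^\mu$ from the resulting single surface integral $\oint_{\partial\Omega}d^{D-1}x\,n_\mu j^\mu$; the passage from ``this flux vanishes/is constant for every admissible $\Omega$'' to the pointwise statement $\partial_\mu j^\mu=0$ is left implicit, and, as you observe, it genuinely requires the arbitrariness of $\Omega$ (a localization argument). Your version instead works entirely at the level of the identity \eqref{eq:manip}: the Euler--Lagrange equation \eqref{eq:eom} kills the second bracket, leaving $\partial\Lageps/\partial\ve\rvert_{\ve=0}=\partial_\mu\big[(\delta\mathcal{L}/\delta\partial_\mu\p)\,\partial\p_\ve/\partial\ve\rvert_{\ve=0}\big]$ pointwise, and subtracting \eqref{eq:totdiv} gives $\partial_\mu j^\mu=0$ directly, with no integration and no localization step. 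What your route buys is rigor and economy; what the paper's route buys is expository continuity with Theorem \ref{teo:nonloconst} and the narrative that the nonlocal constant ``collapses'' into a local current. The one technical point you flag --- interchanging $\partial/\partial\ve$ with $\partial_\mu$ --- is indeed licensed by the smoothness of $(\ve,x)\mapsto\p_\ve(x)$ in Definition \ref{defin:family} and was already used in \eqref{eq:manip}, so nothing is missing.
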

\begin{proof}
Add definition \eqref{eq:totdiv} inside expression \eqref{eq:nonloconst} and apply the covariant Gauss divergence theorem to the term containing $\partial_\mu\psi^\mu$. Writing the nonlocal constant as $\oint_{\partial\Omega}d^{D-1}\!x\, n_\mu j^\mu$, expression \eqref{eq:jtotdiv} holds for all $x\in\Omega$.
\end{proof}

At this point, we are able to get conserved currents without necessarily requiring a general invariance theory on the Lagrangian.
Indeed, after having sought a $\psi^\mu(x,\p,\partial\p)$ satisfying the total derivative condition \eqref{eq:totdiv}, we are naturally led by Theorem \ref{teo:NNN} to consider expression \eqref{eq:jtotdiv} as a conserved current. 
Generally we cannot expect to find such a $\psi^\mu(x,\p,\partial\p)$ for a random choice of perturbed field $\p_\ve$.
However there are few and precious Lagrangians that make the research easier.


\subsection{Spacetime-independent theories}
\label{sec:1}

As stated in many reference works on application of Noether's Theorem in scalar field theories (see e.g. Refs \cite{Callan1970,Forger:2003ut,Weinberg:1995mt}), spacetime-independent theories, i.e. whose Lagrangian densities $\mathcal{L}(\p,\partial\p)$ do not have an explicitly dependence on spacetime, are a classical and simple prototype to find conserved currents. 

\begin{theorem}\label{teo:trasl}
Let $x\mapsto \p(x)$ be a solution of the field equation \eqref{eq:eom} for a spacetime-independent Lagrangian density $\mathcal{L}(\p,\partial\p)$. Then, the Energy-Momentum tensor $T^\mu_\nu(\p,\partial\p)$ is a conserved current for equation \eqref{eq:eom}, namely $\partial_\mu T^\mu_\nu=0$, with
\begin{equation}\label{eq:T}
T^\mu_\nu\equiv\frac{\delta\mathcal{L}}{\delta\partial_\mu\p}\partial_\nu\p-\mathcal{L}\,\delta^\mu_\nu\,.
\end{equation}
\end{theorem}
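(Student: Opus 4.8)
The plan is to derive Theorem \ref{teo:trasl} as an immediate consequence of Theorem \ref{teo:NNN}, applied to the spacetime-shift family generated by a rigid translation. Fix an index $\nu$ and take as perturbed field $\p_\ve(x)=\p(x^\mu+\ve\,\delta^\mu_\nu)$, i.e. the rigid shift of the argument along the $\nu$-th coordinate axis. This is the constant-$f$ instance of the spacetime-shift family introduced right after Definition \ref{defin:family}; it is defined for $x$ ranging over a slightly shrunk subdomain of $\Omega$ (equivalently, on a small ball around any interior point), which is enough since the conclusion we are after, $\partial_\mu T^\mu_\nu=0$, is a pointwise identity on the solution. By construction $\partial\p_\ve/\partial\ve\rvert_{\ve=0}=\partial_\nu\p$.

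Next I would check that this family satisfies the total divergence condition \eqref{eq:totdiv}. Since $\mathcal{L}$ has no explicit spacetime dependence, $\Lageps=\mathcal{L}(\p_\ve,\partial\p_\ve)$, and differentiating in $\ve$ at $\ve=0$ by the chain rule, using $\partial(\partial_\mu\p_\ve)/\partial\ve\rvert_{\ve=0}=\partial_\mu\partial_\nu\p$ as in the proof of Theorem \ref{teo:nonloconst}, gives $\partial_\ve\Lageps\rvert_{\ve=0}=\tfrac{\delta\mathcal{L}}{\delta\p}\,\partial_\nu\p+\tfrac{\delta\mathcal{L}}{\delta\partial_\mu\p}\,\partial_\mu\partial_\nu\p$. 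This is precisely the total derivative $\partial_\nu\mathcal{L}(\p,\partial\p)=\partial_\mu\!\big(\delta^\mu_\nu\,\mathcal{L}\big)$ — here one uses again that there is no explicit $x$-dependence, so no $\partial\mathcal{L}/\partial x^\nu$ term survives. Hence the total divergence condition holds with $\psi^\mu=\delta^\mu_\nu\,\mathcal{L}$.

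Theorem \ref{teo:NNN} then yields directly that $j^\mu=\tfrac{\delta\mathcal{L}}{\delta\partial_\mu\p}\,\partial_\nu\p-\delta^\mu_\nu\,\mathcal{L}$ is a conserved current for \eqref{eq:eom}. Comparing with \eqref{eq:T}, this $j^\mu$ is exactly $T^\mu_\nu$, and since $\nu$ was arbitrary we conclude $\partial_\mu T^\mu_\nu=0$ for every $\nu$, i.e. all $D$ components of the Energy-Momentum tensor are conserved.

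I do not expect any genuine obstacle: the argument is a one-line chain-rule verification plus an invocation of Theorem \ref{teo:NNN}. The only point meriting a word of care is giving meaning to the rigidly shifted field near $\partial\Omega$; this is handled by noting that conservation of $T^\mu_\nu$ is local, so the construction may be carried out on a small ball around each interior point of $\Omega$, with the boundary playing no essential role.
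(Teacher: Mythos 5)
Your proposal is correct and follows essentially the same route as the paper: the paper uses the spacetime-shift family $\p_\ve(x)=\p(x+\ve a)$ with a general constant $a^\mu$, verifies the total divergence condition with $\psi^\mu=\mathcal{L}\,\delta^\mu_\nu a^\nu$, and invokes Theorem \ref{teo:NNN}, extracting the $D$ components at the end, whereas you simply fix $a=e_\nu$ from the start. The only (welcome but inessential) addition on your part is the explicit remark about localizing the shifted field away from $\partial\Omega$.
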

\begin{proof}
Consider the spacetime-shift family $\p_\ve(x)=\p(x+\ve a)$ with the constant parameter $a^\mu\in\mathbb{R}^D$. It follows that $\partial\p_\ve/\partial\ve\rvert_{\ve=0}=a^\mu\partial_\mu\p$, so we can compute 
\begin{equation}
\begin{split}
\frac{\partial \Lageps}{\partial\ve}\bigg\rvert_{\ve=0}&=
\frac{\delta\mathcal{L}}{\delta\p}\frac{\partial\p_\ve}{\partial\ve}\bigg\rvert_{\ve=0}+\frac{\delta\mathcal{L}}{\delta\partial_\mu\p}\partial_\mu\frac{\partial\p_\ve}{\partial\ve}\bigg\rvert_{\ve=0}\\
&=\frac{\delta\mathcal{L}}{\delta\p}a^\mu\partial_\mu\p+\frac{\delta\mathcal{L}}{\delta\partial_\mu\p}a^\nu\partial_\mu\partial_\nu\p\\
&=\partial_{\mu}\big(\mathcal{L}\,\delta^\mu_\nu a^\nu\big)\,.
\end{split}
\end{equation}
Observe that $\p_\ve$  satisfies the total divergence condition \eqref{eq:totdiv} with $\psi^\mu=\mathcal{L}\,\delta^\mu_\nu a^\nu$. Hence, applying Theorem \ref{teo:NNN} we get
\begin{equation}
j^\mu_{(a)}=\left[\frac{\delta \mathcal{L}}{\delta\,\partial_\mu\p}\,\partial_\nu\p - \mathcal{L} \,\delta^\mu_\nu\right] a^\nu\quad \textnormal{for all } x\in\Omega.
\end{equation}
Since there are $D$ independent generators for each possible choice of translation parameters $a^\nu\!$, the final result \eqref{eq:T} is obtained by defining $j_{(a)}^\mu\equiv T^\mu_\nu\, a^\nu$.
\end{proof}
As expected, expression \eqref{eq:T} recovers from a nonlocal point of view the  well known Noetherian Energy-Momentum tensor (see e.g. Refs. \cite{Forger:2003ut,FREEDMAN1974354}). This fact should not surprise, since Theorem \ref{teo:trasl} assumes a spacetime-shift symmetric Lagrangian density. It is simple to prove that the Energy-Momentum tensor is symmetric, namely $T^{\mu\nu}=T^{\nu\mu}$. Here $T^{\mu\nu}=\eta^{\mu\sigma}T^\nu_\sigma$.


\subsection{\label{sec:2}Theories with $\delta\mathcal{L}/\delta\partial_\mu\p\propto \partial^\mu(\delta\mathcal{L}/\delta\p)$}

Within the framework of theories with finite degrees of freedom, Scomparin provided in Ref. \cite{scomparin2021} a new nonlocal approach to get the conservation of energy for both the canonical and the  higher-order harmonic oscillators, whose Lagrangians are quadratic in $q$ and derivatives.
In continuity with such result, we deduce that generally the total divergence condition \eqref{eq:totdiv} seems to be too easy to be satisfied starting from field-shift perturbed families if $\delta \mathcal{L}/\delta\partial_\mu\p\!\propto \partial^\mu\delta\mathcal{L}/\delta\p$.

\begin{theorem}\label{teo:ssf}
Consider a Lagrangian density $\Lag$ such that there exists a constant parameter $\rho\in \mathbb{R}$ such that for all motions, whether solutions or not of \eqref{eq:eom},
\begin{equation}\label{eq:ssfcondition}
\frac{\delta \mathcal{L}}{\delta\partial^\mu\p}=\rho\,\partial_\mu\frac{\delta\mathcal{L}}{\delta\p}\quad \textnormal{for all } x\in\Omega\,.
\end{equation}
Let $x\mapsto \p(x)$ be a solution of the field equation \eqref{eq:eom} for $\mathcal{L}$. Then, ${K}^{\mu}_\nu(x,\p,\partial\p)$ is a conserved current for equation \eqref{eq:eom}, namely $\partial_\mu{K}^{\mu}_\nu=0$, with
\begin{equation}\label{eq:D}
{K}^{\mu}_\nu\equiv
\frac{\delta\mathcal{L}}{\delta\partial_\mu\p}
\frac{\delta\mathcal{L}}{\delta\partial^\nu\p}
-\frac{1}{2}\delta^\mu_\nu\left[\rho\,\frac{\delta\mathcal{L}}{\delta\p}\frac{\delta\mathcal{L}}{\delta\p}+
\frac{\delta\mathcal{L}}{\delta\partial^\alpha\p}\frac{\delta\mathcal{L}}{\delta\partial_\alpha\p}\right].
\end{equation}
\end{theorem}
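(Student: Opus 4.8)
The plan is to mimic the strategy of Theorem~\ref{teo:trasl}: exhibit an explicit perturbed field, show it satisfies the total divergence condition~\eqref{eq:totdiv} for a computable $\psi^\mu$, and then read off the conserved current from Theorem~\ref{teo:NNN}. Given the structure of~\eqref{eq:D}, which is quadratic in the functional derivatives of $\mathcal{L}$ and carries one free lower index $\nu$, the natural choice is the internal-shift family built out of the equations of motion themselves, namely $\p_\ve(x)=\p(x)+\ve\,\partial_\nu\big(\delta\mathcal{L}/\delta\p\big)$ for each fixed $\nu$ (more precisely a one-parameter family $\p_\ve = \p + \ve\, g_\nu$ with $g_\nu = \partial_\nu(\delta\mathcal{L}/\delta\p)$). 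This is a legitimate internal family in the sense of Definition~\ref{defin:family}, since $g_\nu$ is a smooth function of $x$ once evaluated on the solution $\p(x)$. Then $\partial\p_\ve/\partial\ve\rvert_{\ve=0}=g_\nu$ and $\partial_\mu\partial\p_\ve/\partial\ve\rvert_{\ve=0}=\partial_\mu g_\nu$.

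The key computation is to expand $\partial\Lageps/\partial\ve\rvert_{\ve=0}$ using the first line of~\eqref{eq:manip}, i.e.
\[
\frac{\partial \Lageps}{\partial\ve}\bigg\rvert_{\ve=0}
=\frac{\delta\mathcal{L}}{\delta\p}\,g_\nu+\frac{\delta\mathcal{L}}{\delta\partial_\mu\p}\,\partial_\mu g_\nu\,,
\]
and then to use hypothesis~\eqref{eq:ssfcondition} to rewrite $\delta\mathcal{L}/\delta\partial_\mu\p=\rho\,\partial^\mu(\delta\mathcal{L}/\delta\p)$. Writing $\Phi\equiv\delta\mathcal{L}/\delta\p$ for brevity (only in the scratch work, not in the final text), the right-hand side becomes $\Phi\,\partial_\nu\Phi+\rho\,\partial^\mu\Phi\,\partial_\mu\partial_\nu\Phi$, which is exactly the total $x^\nu$-derivative of $\tfrac12\Phi^2+\tfrac{\rho}{2}\,\partial^\mu\Phi\,\partial_\mu\Phi$. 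Hence the total divergence condition holds with
\[
\psi^\mu=\delta^\mu_\nu\Big[\tfrac12\Big(\tfrac{\delta\mathcal{L}}{\delta\p}\Big)^{\!2}+\tfrac{\rho}{2}\,\partial^\alpha\tfrac{\delta\mathcal{L}}{\delta\p}\,\partial_\alpha\tfrac{\delta\mathcal{L}}{\delta\p}\Big]\,.
\]
Substituting this $\psi^\mu$ and $\partial\p_\ve/\partial\ve\rvert_{\ve=0}=\partial_\nu(\delta\mathcal{L}/\delta\p)$ into~\eqref{eq:jtotdiv}, and replacing one factor $\partial_\mu(\delta\mathcal{L}/\delta\p)$ by $\rho^{-1}\,\delta\mathcal{L}/\delta\partial^\mu\p$ wherever it helps to land on the stated form, produces $K^\mu_\nu$ as in~\eqref{eq:D} (up to an overall constant that one fixes by rescaling, exactly as the $a^\nu$-normalization is handled in Theorem~\ref{teo:trasl}). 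Since $\nu$ runs over $D$ independent values, one gets $D$ conserved currents, and Theorem~\ref{teo:NNN} guarantees $\partial_\mu K^\mu_\nu=0$.

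The main obstacle is bookkeeping rather than conceptual: one must be careful that hypothesis~\eqref{eq:ssfcondition} is assumed to hold for \emph{all} field configurations, not only solutions, because it is used to rewrite $\delta\mathcal{L}/\delta\partial_\mu\p$ inside the expression for $\partial\Lageps/\partial\ve$ before the equation of motion is invoked; this is precisely why the theorem states ``whether solutions or not of~\eqref{eq:eom}.'' A second delicate point is index placement and the use of the metric $\eta_{\mu\nu}$ to move between $\delta\mathcal{L}/\delta\partial^\mu\p$ and $\delta\mathcal{L}/\delta\partial_\mu\p$: one should check that the symmetric combination in the bracket of~\eqref{eq:D} is exactly what emerges from $\tfrac12\Phi^2+\tfrac{\rho}{2}\partial^\alpha\Phi\,\partial_\alpha\Phi$ after using~\eqref{eq:ssfcondition} once more to trade $\rho\,(\delta\mathcal{L}/\delta\p)$-terms for $(\delta\mathcal{L}/\delta\partial\p)$-terms, so that the final current is written purely in the form displayed. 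Finally, one should note (as the paper does elsewhere) that for $D=1$ this reduces to the known energy conservation for the harmonic-type oscillators of Ref.~\cite{scomparin2021}, which serves as a consistency check on the normalization.
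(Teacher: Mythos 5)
Your proposal is correct and follows essentially the same route as the paper: the paper uses the field-shift family $\p_\ve=\p+\ve\,b^\mu(\delta\mathcal{L}/\delta\partial^\mu\p)$, which by hypothesis \eqref{eq:ssfcondition} is exactly your family $\p_\ve=\p+\ve\,\partial_\nu(\delta\mathcal{L}/\delta\p)$ up to the constant factor $\rho$, and both arguments then verify the total divergence condition with the same quadratic $\psi^\mu$ and invoke Theorem \ref{teo:NNN}. The only difference is the harmless overall normalization by $\rho$, which you correctly flag and absorb into the definition of $K^\mu_\nu$.
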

\begin{proof}
Consider the field-shift family $\p_\ve=\p+\ve \,b^\mu(\delta\mathcal{L}/\delta\partial^\mu\p)$ with  the constant parameter $b^\mu\in \mathbb{R}^D$. It follows that $\partial\p_\ve/\partial\ve\rvert_{\ve=0}=b^\mu(\delta\mathcal{L}/\delta\partial^\mu\p)$, so we can compute
\begin{equation}\label{eq:D1}
\begin{split}
\frac{\partial \Lageps}{\partial\ve}\bigg\rvert_{\ve=0}&=
\frac{\delta\mathcal{L}}{\delta\p}\frac{\partial\p_\ve}{\partial\ve}\bigg\rvert_{\ve=0}+\frac{\delta\mathcal{L}}{\delta\partial_\alpha\p}\partial_\alpha\frac{\partial\p_\ve}{\partial\ve}\bigg\rvert_{\ve=0}\\
&=\left[\rho\frac{\delta\mathcal{L}}{\delta\p}
\,\partial_\nu\frac{\delta\mathcal{L}}{\delta\p}+
\frac{\delta\mathcal{L}}{\delta\partial_\alpha\p}\,\partial_\alpha
\frac{\delta\mathcal{L}}{\delta\partial^\nu\p}\right]b^\nu\\
&=\partial_{\mu}\left[\frac{1}{2}\!\left(\!\rho\,\frac{\delta\mathcal{L}}{\delta\p}\frac{\delta\mathcal{L}}{\delta\p}\!+\!
\frac{\delta\mathcal{L}}{\delta\partial^\alpha\p}\frac{\delta\mathcal{L}}{\delta\partial_\alpha\p}\!\right)\delta^\mu_\nu b^\nu\right],
\end{split}
\end{equation}
where in the second step we used the condition \eqref{eq:ssfcondition}. Observe that $\p_\ve$  satisfies the total divergence condition \eqref{eq:totdiv} with $\psi^\mu=\tfrac{1}{2}\big[\rho\,(\delta\mathcal{L}/\delta\p)^2+(\delta\mathcal{L}/\delta\partial\p)^2\big]\,\delta^\mu_\nu b^\nu$. Hence, applying Theorem \ref{teo:NNN} we get
\begin{equation}
j^\mu_{(b)}=\left[\frac{\delta\mathcal{L}}{\delta\partial_\mu\p}
\frac{\delta\mathcal{L}}{\delta\partial^\nu\p}
-\frac{1}{2}\delta^\mu_\nu\left(\rho\,\frac{\delta\mathcal{L}}{\delta\p}\frac{\delta\mathcal{L}}{\delta\p}+
\frac{\delta\mathcal{L}}{\delta\partial^\alpha\p}\frac{\delta\mathcal{L}}{\delta\partial_\alpha\p}\right)\right] b^\nu\quad \textnormal{for all } x\in\Omega.
\end{equation}
Since there are $D$ independent generators for each possible choice of the translation parameter $b^\nu$, the final result \eqref{eq:D} is obtained by defining $j_{(b)}^\mu\equiv {K}^\mu_\nu\, b^\nu$.
\end{proof}
We leave it to the reader to verify that Theorem \ref{teo:ssf} is the covariant transposition of energy conservation for the canonical  harmonic oscillator $L=\tfrac{1}{2}|\dot{q}|^2-\tfrac{1}{2}|q|^2$ and its higher-order generalizations. 
It is also simple to prove that ${K}^{\mu\nu}=\eta^{\nu\sigma}K^\mu_\sigma$ is symmetric, namely ${K}^{\mu\nu}={K}^{\nu\mu}$.

As a final remark, we notice that condition \eqref{eq:ssfcondition} can be read as a functional PDE that restricts the form of the Lagrangian densities to which apply Theorem \ref{teo:ssf}. In this respect, Subsec. \ref{sub:spacetimeex} will use Theorem \ref{teo:ssf} with a relevant class of solutions of equation \eqref{eq:ssfcondition} whose coefficients explicitly depend on spacetime.


\subsection{\label{sec:3}Currents from the Energy-Momentum tensor}
In this section we consider theories that yield conserved currents by satisfying specific conditions on their Lagrangian density. In particular, we deduce conserved currents that are generated by the Energy-Momentum tensor. To exploit such behavior, let us introduce the following preliminary result.

\begin{theorem}\label{teo:ssf2}
Consider a Lagrangian density $\Lag$ such that $\delta\mathcal{L}/\delta x^\mu= \partial_\mu\vt $ with $\vt (x,\p,\partial\p)$.
Suppose that for a given family $(\ve,x)\mapsto \vp_\ve(x)$ with $\vp_0=0$ there exists a $D$-vector  $x\mapsto f^\alpha(x)$ such that for all motions, whether solutions or not of \eqref{eq:eom},
\begin{equation}\label{eq:cond2d}
\frac{\partial \mathcal{L}(x,\vp_\ve,\partial\vp_\ve)}{\partial\ve}\bigg\rvert_{\ve=0}=
-\left[\frac{\delta \mathcal{L}}{\delta\partial_\mu\p}\partial_\alpha\p-(\mathcal{L}-\vt)\,\delta^\mu_\alpha\right] \partial_\mu f^\alpha.
\end{equation}
Let $x\mapsto \p(x)$ be a solution of the field equation \eqref{eq:eom} for $\mathcal{L}$. Then, $j^\mu(x,\p,\partial\p)$ is a conserved current for equation \eqref{eq:eom}, namely $\partial_\mu j^{\mu}=0$, with
\begin{equation}\label{eq:d}
j^\mu\equiv
\left[\frac{\delta\mathcal{L}}{\delta\partial_\mu\p}\partial_\alpha\p-\big(\mathcal{L}-\vt\big)\,\delta^\mu_\alpha\right]  f^\alpha+\frac{\delta\mathcal{L}}{\delta\partial_\mu\p}\,\frac{\partial\vp_\ve}{\partial\ve}\bigg\rvert_{\ve=0}.
\end{equation}
\end{theorem}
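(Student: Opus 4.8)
The plan is to combine the specific perturbed field appearing in the hypothesis with a spacetime-shift built from $f^\alpha(x)$, so that the total divergence condition of Definition \ref{def:totdiv} is met and Theorem \ref{teo:NNN} applies. Concretely, I would consider the composite perturbed field $\p_\ve(x)=\vp_\ve\big(x+\ve f(x)\big)$ associated to the solution $\p(x)$; note that because $\vp_0=0$ one must be careful about how the two perturbations are layered — the cleanest choice is to take a perturbed field whose $\ve$-derivative at $\ve=0$ equals $f^\alpha\partial_\alpha\p+\partial\vp_\ve/\partial\ve|_{\ve=0}$, which is exactly the object that will appear in \eqref{eq:d}. Such a field exists: for instance $\p_\ve(x)=\p\big(x+\ve f(x)\big)+\vp_\ve(x)$ does the job, since its $\ve$-derivative at $\ve=0$ is $f^\alpha\partial_\alpha\p+\partial\vp_\ve/\partial\ve|_{\ve=0}$.

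Next I would compute $\partial\Lageps/\partial\ve|_{\ve=0}$ for this $\p_\ve$ and show it is a total divergence. By linearity of the first variation (the chain-rule identity used in \eqref{eq:manip}), this derivative splits into the contribution from the spacetime-shift part and the contribution from $\vp_\ve$. For the shift part $\p(x+\ve f(x))$, a computation parallel to the one in the proof of Theorem \ref{teo:trasl}, but now keeping the explicit spacetime dependence via $\delta\mathcal{L}/\delta x^\mu=\partial_\mu\vt$, yields $\partial_\mu\big[(\mathcal{L}-\vt)\delta^\mu_\alpha f^\alpha\big]$ plus a leftover term of the form $\big[\tfrac{\delta\mathcal{L}}{\delta\partial_\mu\p}\partial_\alpha\p-(\mathcal{L}-\vt)\delta^\mu_\alpha\big]\partial_\mu f^\alpha$. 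For the $\vp_\ve$ part, hypothesis \eqref{eq:cond2d} says precisely that its contribution cancels this leftover term. Hence $\partial\Lageps/\partial\ve|_{\ve=0}=\partial_\mu\psi^\mu$ with $\psi^\mu=(\mathcal{L}-\vt)\delta^\mu_\alpha f^\alpha$, i.e. the total divergence condition holds.

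Finally, I would invoke Theorem \ref{teo:NNN}: the conserved current is $j^\mu=\tfrac{\delta\mathcal{L}}{\delta\partial_\mu\p}\,\partial\p_\ve/\partial\ve|_{\ve=0}-\psi^\mu$. Substituting $\partial\p_\ve/\partial\ve|_{\ve=0}=f^\alpha\partial_\alpha\p+\partial\vp_\ve/\partial\ve|_{\ve=0}$ and $\psi^\mu=(\mathcal{L}-\vt)\delta^\mu_\alpha f^\alpha$ gives exactly \eqref{eq:d}, up to regrouping the $f^\alpha$ terms into the bracket $\big[\tfrac{\delta\mathcal{L}}{\delta\partial_\mu\p}\partial_\alpha\p-(\mathcal{L}-\vt)\delta^\mu_\alpha\big]f^\alpha$.

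The main obstacle I anticipate is the bookkeeping in the spacetime-shift computation with explicit $x$-dependence: one must correctly account for the term $\delta\mathcal{L}/\delta x^\mu=\partial_\mu\vt$ when differentiating $\mathcal{L}\big(x,\p(x+\ve f),\partial\p(x+\ve f)\big)$ in $\ve$, and track how the Euler--Lagrange equation is used to discard the $\big[\delta\mathcal{L}/\delta\p-\partial_\mu(\delta\mathcal{L}/\delta\partial_\mu\p)\big]$ piece while the non-solution identity \eqref{eq:cond2d} is invoked before imposing \eqref{eq:eom}. Care is also needed that the hypothesis \eqref{eq:cond2d} is stated ``for all motions'' so it may legitimately be combined with the off-shell manipulation, with the on-shell reduction applied only at the very end through Theorem \ref{teo:NNN}.
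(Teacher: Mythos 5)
Your proposal is correct and follows essentially the same route as the paper: the same mixed perturbed field $\p_\ve=\p(x+\ve f(x))+\vp_\ve(x)$, the same use of $\delta\mathcal{L}/\delta x^\mu=\partial_\mu\vt$ to recognize the shift contribution as $\partial_\mu(\mathcal{L}-\vt)f^\mu$ plus the leftover that hypothesis \eqref{eq:cond2d} cancels, the same $\psi^\mu=(\mathcal{L}-\vt)\delta^\mu_\alpha f^\alpha$, and the same final appeal to Theorem \ref{teo:NNN}. Your closing remark about keeping \eqref{eq:cond2d} off-shell and imposing \eqref{eq:eom} only through Theorem \ref{teo:NNN} is exactly how the paper structures the argument.
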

\begin{proof}
Consider the mixed-family $\p_\ve=\p(x+\ve f(x))+\vp_\ve(x)$. It follows that $\partial\p_\ve/\partial\ve\rvert_{\ve=0}=\partial_\mu\p f^\mu+\partial\vp_\ve/\partial\ve\rvert_{\ve=0}$. By defining $\mathcal{L}_{\p_\ve}\equiv\Lageps$ and $\mathcal{L}_{\vp_\ve}\equiv\mathcal{L}(x,\vp_\ve,\partial\vp_\ve)$, we can compute:
\begin{equation}\label{eq:TT1}
\begin{split}
\frac{\partial \mathcal{L}_{\p_\ve}}{\partial\ve}\bigg\rvert_{\ve=0}&=\frac{\delta\mathcal{L}}{\delta\p}\frac{\partial\p_\ve}{\partial\ve}\bigg\rvert_{\ve=0}+\frac{\delta\mathcal{L}}{\delta\partial_\alpha\p}\partial_\alpha\frac{\partial\p_\ve}{\partial\ve}\bigg\rvert_{\ve=0}\\
&=\frac{\delta\mathcal{L}}{\delta\p}\left[\partial_\mu\p f^\mu+\frac{\partial\vp_\ve}{\partial\ve}\bigg\rvert_{\ve=0}\right]+\frac{\delta\mathcal{L}}{\delta\partial_\alpha\p}\left[\partial_\alpha(\partial_\mu\p f^\mu)+\partial_\alpha\frac{\partial\vp_\ve}{\partial\ve}\bigg\rvert_{\ve=0}\right]\\
&= \delta^\alpha_\mu\partial_\mu(\mathcal{L}-\vt)f^\mu+\left[\frac{\partial \mathcal{L}_{\vp_\ve}}{\partial\ve}\bigg\rvert_{\ve=0}+\frac{\delta\mathcal{L}}{\delta\partial_\alpha\p}\partial_\mu\p\,\partial_\alpha f^\mu\right]\\
& = \partial_\mu\big[\delta^\mu_\alpha(\mathcal{L}-\vt)f^\alpha\big]
\end{split}
\end{equation}
where in the second and third steps we used the definition $\delta\mathcal{L}/\delta x^\mu= \partial_\mu\vt $ and the condition \eqref{eq:cond2d} respectively. Observe that the field-shift $\p_\ve$  satisfies the total divergence condition \eqref{eq:totdiv} with the function $\psi^\mu=\delta^\mu_\alpha(\mathcal{L}-\vt)f^\alpha$. Hence, applying Theorem \ref{teo:NNN}, we get expression \eqref{eq:d}.
\end{proof}
The above Theorem \ref{teo:ssf2}  is completely general, in the sense that it does not place any assumption on the dependencies of the Lagrangian density $\Lag$. However, in the spacetime independent case $\mathcal{L}(\p,\partial\p)$ we can say more.
In particular, in that case the Energy-Momentum tensor becomes a generator of conserved currents.
\begin{corollary}\label{teo:ssf3}
Consider a spacetime-independent Lagrangian density $\mathcal{L}(\p,\partial\p)$ and suppose that for a given family $(\ve,x)\mapsto \vp_\ve(x)$ with $\vp_0=0$ there exists a $D$-vector $x\mapsto f^\alpha(x)$ such that for all motions, whether solutions or not of \eqref{eq:eom},
\begin{equation}\label{eq:cond2.1}
\frac{\partial \mathcal{L}(\vp_\ve,\partial\vp_\ve)}{\partial\ve}\bigg\rvert_{\ve=0}=-T^\mu_\alpha\,\partial_\mu f^\alpha\,,
\end{equation}
where $T^\mu_\nu$ is the Energy-Momentum tensor \eqref{eq:T} for $\mathcal{L}$.
Let $x\mapsto \p(x)$ be a solution of the field equation \eqref{eq:eom} for $\mathcal{L}$. Then, $j^\mu(x,\p,\partial\p)$ is a conserved current for equation \eqref{eq:eom}, namely $\partial_\mu j^{\mu}=0$, with
\begin{equation}\label{eq:xx}
j^\mu=T^\mu_\alpha f^\alpha+\frac{\delta\mathcal{L}}{\delta\partial_\mu\p}\frac{\partial\vp_\ve}{\partial\ve}\bigg\rvert_{\ve=0}.
\end{equation}
\end{corollary}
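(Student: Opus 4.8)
The plan is to derive Corollary~\ref{teo:ssf3} as a direct specialization of Theorem~\ref{teo:ssf2}. First I would observe that a spacetime-independent Lagrangian density $\mathcal{L}(\p,\partial\p)$ trivially satisfies the hypothesis $\delta\mathcal{L}/\delta x^\mu=\partial_\mu\vt$ of Theorem~\ref{teo:ssf2}: since $\mathcal{L}$ has no explicit spacetime dependence, $\delta\mathcal{L}/\delta x^\mu=0$, so we may simply take $\vt\equiv 0$ (or any constant). With this choice the bracketed expression appearing in \eqref{eq:cond2d} collapses: $\frac{\delta\mathcal{L}}{\delta\partial_\mu\p}\partial_\alpha\p-(\mathcal{L}-\vt)\,\delta^\mu_\alpha=\frac{\delta\mathcal{L}}{\delta\partial_\mu\p}\partial_\alpha\p-\mathcal{L}\,\delta^\mu_\alpha$, which is exactly the Energy-Momentum tensor $T^\mu_\alpha$ as defined in \eqref{eq:T}.

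Next I would check that, under these substitutions, condition \eqref{eq:cond2d} of Theorem~\ref{teo:ssf2} becomes precisely the hypothesis \eqref{eq:cond2.1} of the corollary, namely $\frac{\partial \mathcal{L}(\vp_\ve,\partial\vp_\ve)}{\partial\ve}\big\rvert_{\ve=0}=-T^\mu_\alpha\,\partial_\mu f^\alpha$. This is immediate from the identification of the bracket with $T^\mu_\alpha$; one just has to note that $\mathcal{L}(x,\vp_\ve,\partial\vp_\ve)$ reduces to $\mathcal{L}(\vp_\ve,\partial\vp_\ve)$ in the spacetime-independent case. Having matched the hypotheses, Theorem~\ref{teo:ssf2} applies verbatim and yields that $j^\mu$ given by \eqref{eq:d} is a conserved current. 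Substituting $\vt=0$ and the bracket $=T^\mu_\alpha$ into \eqref{eq:d} produces exactly \eqref{eq:xx}: $j^\mu=T^\mu_\alpha f^\alpha+\frac{\delta\mathcal{L}}{\delta\partial_\mu\p}\frac{\partial\vp_\ve}{\partial\ve}\big\rvert_{\ve=0}$.

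There is essentially no obstacle here — the corollary is, by design, the spacetime-independent shadow of the more general theorem, and the proof is a two-line bookkeeping argument: set $\vt=0$, recognize the Energy-Momentum tensor inside the bracket, and invoke Theorem~\ref{teo:ssf2}. The only thing worth a word of care is making sure that the hypothesis of Theorem~\ref{teo:ssf2} that the condition hold "for all motions, whether solutions or not" is inherited, which it is, since \eqref{eq:cond2.1} is assumed with the same universal quantifier. If desired one could instead give a self-contained proof by repeating the computation of \eqref{eq:TT1} with the mixed family $\p_\ve=\p(x+\ve f(x))+\vp_\ve(x)$, using $\partial_\mu\mathcal{L}=\frac{\delta\mathcal{L}}{\delta\p}\partial_\mu\p+\frac{\delta\mathcal{L}}{\delta\partial_\alpha\p}\partial_\mu\partial_\alpha\p$ (no explicit $x$-term), but reducing to Theorem~\ref{teo:ssf2} is cleaner and I would present it that way.

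\begin{proof}
This is the spacetime-independent specialization of Theorem~\ref{teo:ssf2}. Since $\mathcal{L}(\p,\partial\p)$ has no explicit dependence on spacetime, the hypothesis $\delta\mathcal{L}/\delta x^\mu=\partial_\mu\vt$ of Theorem~\ref{teo:ssf2} is satisfied with the choice $\vt\equiv 0$, because $\delta\mathcal{L}/\delta x^\mu=0$. With $\vt=0$, the $D$-tensor appearing in the bracket of condition \eqref{eq:cond2d},
\begin{equation}
\frac{\delta\mathcal{L}}{\delta\partial_\mu\p}\partial_\alpha\p-(\mathcal{L}-\vt)\,\delta^\mu_\alpha
=\frac{\delta\mathcal{L}}{\delta\partial_\mu\p}\partial_\alpha\p-\mathcal{L}\,\delta^\mu_\alpha
=T^\mu_\alpha\,,
\end{equation}
coincides with the Energy-Momentum tensor \eqref{eq:T}. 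Hence, for the given family $\vp_\ve$ with $\vp_0=0$ and the $D$-vector $f^\alpha$, condition \eqref{eq:cond2d} reads
\begin{equation}
\frac{\partial\mathcal{L}(\vp_\ve,\partial\vp_\ve)}{\partial\ve}\bigg\rvert_{\ve=0}=-T^\mu_\alpha\,\partial_\mu f^\alpha\,,
\end{equation}
which is exactly the assumption \eqref{eq:cond2.1}, holding for all motions whether solutions or not of \eqref{eq:eom}. Therefore Theorem~\ref{teo:ssf2} applies, and the conserved current \eqref{eq:d} it provides becomes, upon setting $\vt=0$ and using the identification of the bracket with $T^\mu_\alpha$,
\begin{equation}
j^\mu=T^\mu_\alpha f^\alpha+\frac{\delta\mathcal{L}}{\delta\partial_\mu\p}\,\frac{\partial\vp_\ve}{\partial\ve}\bigg\rvert_{\ve=0}\,,
\end{equation}
which is \eqref{eq:xx}. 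This concludes the proof.
\end{proof}
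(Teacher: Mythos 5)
Your proposal is correct and follows exactly the paper's own route: the paper likewise proves the corollary by invoking Theorem \ref{teo:ssf2} with $\vt=0$ (since $\mathcal{L}$ has no explicit spacetime dependence) and substituting the Energy-Momentum tensor \eqref{eq:T} into \eqref{eq:cond2d} and \eqref{eq:d}. Your version just spells out the bookkeeping more explicitly.
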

\begin{proof}
Consider Theorem \ref{teo:ssf2}. Being $\mathcal{L}=\mathcal{L}(\p,\partial\p)$, we have $\vt=0$. Substitute the definition of Energy-Momentum tensor \eqref{eq:T} inside expressions  \eqref{eq:cond2d} and \eqref{eq:d}.
\end{proof}


\subsection{Homogeneous potentials of degree $k=D\Delta_D^{-1}$}
\label{sec:4}

Homogeneous functions are widely used in many aspects of physics and mathematics. More specifically, if $F:\p\rightarrow F(\p)$ is a smooth function and $k$ is a number, then $F$ is said to be \textit{homogeneous of degree} $k$  in $\Omega$ if
\begin{equation}\label{eq:Uhom}
F(s\p)=s^kF(\p)\quad \textnormal{for all } s\in\mathbb{R},\,x\in\Omega.
\end{equation}
Generally, the degree of a homogeneous function can be thought of as describing how the function behaves under change of scale. 
In this respect, it is common practice to study the scaling properties of a field theory through the use of the so-called \textit{scaling dimension} $\Delta_D\equiv (D-2)/2$ of the related field \cite{CardyJ:1996}.

We noticed that in Ref. \cite{Zampieri:2013ziw} the authors proved a new approach to a well-known result by Logan \cite{Logan:1977} to find first integrals for systems $L=\tfrac{1}{2}|\dot{q}|^2-U(|q|)$ in a homogeneous potential $U(|q|)$ of degree $k=-2$. Inspired by this example, we deduce that such result can be extended to scalar field theories with $D$  spacetime dimensions. In particular, we discover that the total divergence condition seems to be too easy to satisfy for homogeneous potentials whose degree of homogeneity depends on the scaling dimension $\Delta_D$ of the field.
\begin{theorem}\label{teo:hom2}
Consider the spacetime-independent Lagrangian density
\begin{equation}\label{eq:ref}
\mathcal{L}(\p,\partial\p)=\frac{1}{2}\partial_\mu\p\,\partial^\mu\p-U(\p)\,,
\end{equation}
and assume that the potential $U(\p)$ is homogeneous of degree $k=D\Delta_D^{-1}$.
Let $x\mapsto \p(x)$ be a solution of the field equation \eqref{eq:eom} for $\mathcal{L}$, and let $T^\mu_\nu$ be the Energy-Momentum tensor \eqref{eq:T} for $\mathcal{L}$. Then, $J^\mu(x,\p,\partial\p)$ and $H^{\mu\nu}(x,\p,\partial\p)$ are conserved currents for equation \eqref{eq:eom}, namely $\partial_\mu J^{\mu}=\partial_{\mu} H^{\mu\nu}=0$, with
\begin{equation}\label{eq:homJ}
J^\mu=\Delta_D\,\p\, \partial^\mu\p+x^\alpha T^\mu_\alpha\,,
\end{equation}
\begin{equation}\label{eq:homH}
H^{\mu\nu}=\frac{1}{2}\Delta_D\,\eta^{\mu\nu}\p^2+\frac{1}{2}x^\alpha x_\alpha T^{\mu\nu}-x^\nu J^\mu\,.
\end{equation}
\end{theorem}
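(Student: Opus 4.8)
The strategy is to exhibit two suitable perturbed fields $\p_\ve$ and verify that each satisfies the total divergence condition \eqref{eq:totdiv}, then read off the conserved currents from Theorem \ref{teo:NNN} (or equivalently from Corollary \ref{teo:ssf3}, since $\mathcal{L}(\p,\partial\p)$ is spacetime-independent and $\vt=0$). For the current $J^\mu$ the natural candidate is a combination of a spacetime dilation and a field rescaling, namely $\p_\ve(x)=(1+\ve)^{\Delta_D}\,\p\big((1+\ve)x\big)$, or, to put it in the mixed-family form of Corollary \ref{teo:ssf3}, $f^\alpha(x)=x^\alpha$ together with the internal piece $\vp_\ve$ whose $\ve$-derivative at $\ve=0$ equals $\Delta_D\,\p$. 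The key computation is then to show that under this scaling the Lagrangian \eqref{eq:ref} transforms so that $\partial\mathcal{L}_{\p_\ve}/\partial\ve|_{\ve=0}$ is a pure total divergence; the homogeneity hypothesis $k=D\Delta_D^{-1}$ is exactly the condition that makes the kinetic term (which scales with weight $2\Delta_D+2=D$ under $x\mapsto(1+\ve)x$ combined with the field weight) and the potential term (which scales with weight $k\Delta_D$) conspire to produce $\partial_\mu(x^\mu\mathcal{L})$ up to the $\delta\mathcal{L}/\delta\p$ term that vanishes on shell. Concretely one checks $\partial\mathcal{L}(\vp_\ve,\partial\vp_\ve)/\partial\ve|_{\ve=0}=-T^\mu_\alpha\partial_\mu f^\alpha=-T^\mu_\mu$ using $\partial_\mu f^\alpha=\delta_\mu^\alpha$ and $T^\mu_\mu=\tfrac{1}{2}(2-D)\partial_\mu\p\partial^\mu\p - (1 - D)\,?$... — rather, one verifies the identity $\Delta_D\,\p\,\delta\mathcal{L}/\delta\p + \tfrac12(2-D)\partial_\mu\p\partial^\mu\p = -\,k\Delta_D\,U(\p) + (\text{on-shell vanishing})$ using Euler's relation $\p\,U'(\p)=k\,U(\p)$ and the chosen value of $k$. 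Then Corollary \ref{teo:ssf3} yields \eqref{eq:homJ} directly, with $\delta\mathcal{L}/\delta\partial_\mu\p=\partial^\mu\p$.

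For the second current $H^{\mu\nu}$, I would iterate the construction, now perturbing with a special conformal–type transformation mixed with an appropriate field rescaling: take $f^\alpha(x)=c^\alpha\,x_\beta x^\beta - 2\,x^\alpha\,c_\beta x^\beta$ (the standard conformal Killing vector of flat space, with $c^\alpha$ a constant $D$-vector generator) together with the internal piece $\vp_\ve$ satisfying $\partial\vp_\ve/\partial\ve|_{\ve=0}=-2\Delta_D\,(c_\alpha x^\alpha)\,\p$. Plugging this $f^\alpha$ and $\vp_\ve$ into Theorem \ref{teo:ssf2}/Corollary \ref{teo:ssf3} and using the already-established dilation identity (plus $\partial_\mu f^\alpha$ being the conformal-Killing gradient, whose symmetric trace-free part vanishes and whose trace is $-2D\,c_\beta x^\beta$, and whose antisymmetric part contracts to zero against the symmetric $T^{\mu\nu}$), the condition \eqref{eq:cond2.1} should again be met, and \eqref{eq:xx} gives $j^\mu = T^\mu_\alpha f^\alpha - 2\Delta_D(c_\alpha x^\alpha)\p\,\partial^\mu\p$. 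Contracting out the arbitrary generator $c^\alpha$ and rearranging (using the symmetry $T^{\mu\nu}=T^{\nu\mu}$ noted after Theorem \ref{teo:trasl} and the already-conserved $J^\mu$) should reorganize this into the stated $H^{\mu\nu}=\tfrac12\Delta_D\eta^{\mu\nu}\p^2+\tfrac12 x^\alpha x_\alpha T^{\mu\nu}-x^\nu J^\mu$. It may be cleaner to verify $\partial_\mu H^{\mu\nu}=0$ directly by differentiating \eqref{eq:homH}: expand $\partial_\mu H^{\mu\nu}$ using $\partial_\mu T^{\mu\nu}=0$, $\partial_\mu J^\mu=0$, $\partial_\mu x^\nu=\delta_\mu^\nu$, the trace identity for $T^\mu_\mu$ derived in the first half, and the on-shell field equation $\Box\p=-U'(\p)$; everything should cancel.

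\textbf{Main obstacle.} The delicate point is the bookkeeping of scaling weights in the dilation identity — keeping straight the contribution $\tfrac12(2-D)\partial_\mu\p\,\partial^\mu\p$ from the kinetic term versus $-k\Delta_D\,U(\p)$ from the potential, and seeing that with $k=D\Delta_D^{-1}$ (so that $k\Delta_D=D$) these combine with $\Delta_D\,\p\,\delta\mathcal{L}/\delta\p$ (which drops on shell) into exactly $-\,\partial_\mu(x^\mu\mathcal{L})$ minus $T^\mu_\mu$, i.e. matching $-T^\mu_\alpha\partial_\mu f^\alpha$ for $f^\alpha=x^\alpha$. I expect the $D=2$ (i.e. $\Delta_D=0$) case to need a separate remark, since then $k=D\Delta_D^{-1}$ is singular and the theorem is presumably understood for $D\neq 2$. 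The second, conformal, current is more calculation-heavy but structurally a repeat; the only real subtlety there is confirming that the symmetric trace-free part of $\partial_\mu f^\alpha$ genuinely drops out when contracted against $T^\mu_\alpha$, which follows because the kinetic stress tensor is not traceless in general — so one must instead rely on the internal compensating piece $\vp_\ve$ to absorb precisely that term, exactly as the dilation piece was absorbed in the first half. Verifying $\partial_\mu H^{\mu\nu}=0$ by brute force is the safe fallback if the perturbed-field derivation gets unwieldy.
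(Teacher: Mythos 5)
Your derivation of $J^\mu$ is essentially the paper's own: the paper takes the mixed family $\p_\ve=e^{\,\ve\Delta_D}\p(e^{\ve}x)$, computes $\partial\p_\ve/\partial\ve\rvert_{\ve=0}=\Delta_D\p+x^\mu\partial_\mu\p$, and uses Euler's relation $\p\,U'=kU$ with $k\Delta_D=D$ to reach the total divergence condition with $\psi^\mu=x^\mu\mathcal{L}$, after which Theorem \ref{teo:NNN} gives \eqref{eq:homJ}. Your weight bookkeeping (kinetic weight $2\Delta_D+2=D$ against potential weight $k\Delta_D=D$) is exactly the mechanism at work, and the $D=2$ caveat you raise is indeed recorded in the paper after the theorem.

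For $H^{\mu\nu}$ your primary route has a concrete gap. With $f^\alpha=c^\alpha x_\beta x^\beta-2x^\alpha c_\beta x^\beta$ and $\partial\vp_\ve/\partial\ve\rvert_{\ve=0}=-2\Delta_D(c_\alpha x^\alpha)\p$, the current \eqref{eq:xx} would read $j^\mu=x_\beta x^\beta\,T^\mu_\alpha c^\alpha-2(c_\alpha x^\alpha)J^\mu$, which differs from $2H^{\mu\nu}c_\nu$ by exactly the improvement term $\Delta_D c^\mu\p^2$; since $\partial_\mu(\Delta_D c^\mu\p^2)=2\Delta_D c^\mu\p\,\partial_\mu\p$ does not vanish on shell, your candidate and the stated $H^{\mu\nu}$ cannot both be conserved. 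In fact condition \eqref{eq:cond2.1} is not met exactly by this choice: a residual total divergence survives, so the special-conformal route is only salvageable through the general Theorem \ref{teo:NNN} with a $\psi^\mu$ carrying the $\p^2$ piece, not through Corollary \ref{teo:ssf3} as stated. The paper sidesteps all of this with your ``fallback'', which is far lighter than you anticipate: the defining identity $0=\Delta_D\p\,\partial^\nu\p+x_\mu T^{\mu\nu}-J^\nu$ is a divergence term by term, because $\Delta_D\p\,\partial^\nu\p=\partial_\mu\big(\tfrac{1}{2}\Delta_D\eta^{\mu\nu}\p^2\big)$, $x_\mu T^{\mu\nu}=\partial_\mu\big(\tfrac{1}{2}x^\alpha x_\alpha T^{\mu\nu}\big)$ using $\partial_\mu T^{\mu\nu}=0$ and the symmetry of $T^{\mu\nu}$, and $J^\nu=\partial_\mu\big(x^\nu J^\mu\big)$ using $\partial_\mu J^\mu=0$. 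No field equation, trace identity, or Euler relation is needed at this second stage --- only the two conservation laws already established and the definition \eqref{eq:homJ}.
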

\begin{proof}
Consider the mixed-family $\p_\ve=e^{\,\ve\Delta_D}\p(e^{\ve}x)$. It follows that $\partial\p_\ve/\partial\ve\rvert_{\ve=0}=\Delta_D \p+x^\mu\partial_\mu\p$, so we can compute
\begin{equation}\label{eq:HD1}
\begin{split}
\frac{\partial \Lageps}{\partial\ve}\bigg\rvert_{\ve=0}&=
\frac{\delta\mathcal{L}}{\delta\p}\frac{\partial\p_\ve}{\partial\ve}\bigg\rvert_{\ve=0}+\frac{\delta\mathcal{L}}{\delta\partial_\alpha\p}\partial_\alpha\frac{\partial\p_\ve}{\partial\ve}\bigg\rvert_{\ve=0}\\
&=-U'\Big[\Delta_D \p+x^\mu\partial_\mu\p\Big]\!+\!\partial^\alpha\p\Big[(\Delta_D+1) \partial_\alpha\p+x^\mu\partial_{\alpha\mu}\p\Big]\\
&=D\left[\frac{1}{2}\,\partial_\mu\p\,\partial^\mu\p-\frac{\Delta_D}{D}\,\p \,U'\right]+x^\mu\partial_\mu \mathcal{L}\\
&=\partial_\mu\big(x^\mu\mathcal{L}\big),
\end{split}
\end{equation}
where $U'\equiv dU/d\p$. In the last step of \eqref{eq:HD1} we used the trace relation $\partial_\mu x^\mu=\delta^\mu_\mu=D$ and the \textit{Euler's Theorem} for homogeneous functions ($\p\, U'=k\,U$) assuming $k=D\Delta_D^{-1}$ by hypotesys. Observe that $\p_\ve$  satisfies the total divergence condition \eqref{eq:totdiv} with $\psi^\mu=x^\mu\mathcal{L}$. Hence, applying Theorem \ref{teo:NNN} and using the definition \eqref{eq:T} of Energy-Momentum tensor, we get our result \eqref{eq:homJ}
\begin{equation}\label{eq:HD2}
\begin{split}
J^\mu&=\partial^\mu\p\big(\Delta_D\p+x^\alpha\partial_\alpha\p\big)-\delta^\mu_\alpha x^\alpha\mathcal{L}\\
&=\Delta_D\,\p\, \partial^\mu\p+x^\alpha T^\mu_\alpha\,.
\end{split}
\end{equation}
In addition, since $\partial_\mu T^\mu_\nu=\partial_\mu J^\mu=0$, we notice that expression \eqref{eq:HD2} can be rewritten as a vanishing total divergence as follows
\begin{equation}\label{eq:HD3}
\begin{split}
0&=\Delta_D\,\p\, \partial^\mu\p+x^\alpha T^\mu_\alpha-J^\mu\\
&=\partial^\mu\big(\tfrac{1}{2}\Delta_D\p^2\big)+\partial^\nu\big(\tfrac{1}{2} x^\alpha x_\alpha T^\mu_\nu\big)-\partial_\nu\big(x^\mu J^\nu\big)\\
&=\partial_\mu\big(\tfrac{1}{2}\Delta_D\eta^{\mu\nu}\p^2+\tfrac{1}{2}x^\alpha x_\alpha T^{\mu\nu}-x^\nu J^\mu\big)\\
&\equiv\partial_\mu H^{\mu\nu}.
\end{split}
\end{equation}
Hence, also expression \eqref{eq:homH} is a conserved current.
\end{proof}

We want to emphasize that $H^{\mu\nu}$ is not symmetric. In addition, since $\Delta_2^{-1}\rightarrow \infty$, Theorem \ref{teo:hom2} holds for any dimension $D\ne 2$ of spacetime in which the theory is embedded. It should be also noticed that when $D=1$ we have $\Delta_1=-2$, hence the degree of homogeneity required by Theorem \ref{teo:hom2} to work is $k=1\cdot\Delta_1=-2$, which exactly recovers the case analyzed by Ref.  \cite{Zampieri:2013ziw}.

Interestingly, $H^{\mu\nu}$ has illuminating consequences for the dynamics. In particular, by multiplying expression \eqref{eq:homH} by $\eta_{\mu\nu}$ and solving with respect to $\p$, we get 
\begin{equation}\label{eq:DIST}
|\p(x)|=\sqrt{\frac{2 H^\mu_\mu-x^\alpha x_\alpha T^\mu_\mu-2x_\mu J^\mu}{D\Delta_D}}.
\end{equation}
This formula exactly gives how the distance from the origin $|\p(x)|$ depends on spacetime, even though we don't know the shape of the solution $x\mapsto \p(x)$ of the field equation \eqref{eq:eom}. As a consequence, the solution is born and dies at the origin when $2 H^\mu_\mu-x^\alpha x_\alpha T^\mu_\mu-2x_\mu J^\mu=0$. Again, if $D=1$, expression \eqref{eq:DIST} exactly recovers the result proposed in Ref. \cite{Zampieri:2013ziw}.


\subsection{\label{sec:6}Dissipative theories}

We noticed that in Ref. \cite{Zampieri:2013ziw} Gorni and Zampieri proved a new approach to find first integrals for ``dissipative'' systems $L=e^{ht}\big(\tfrac{1}{2}|\dot{q}|^2-U(|q|)\big)$ in a homogeneous potential $U(|q|)$ of degree $k=2$. The wording ``dissipative''  comes from the related equation of motion $\ddot{q}+h\dot{q}+U'=0$, which depends on velocities.

Inspired by this example, we deduce that such result can be extended to scalar field theories within a $D$-dimensional  spacetime.

\begin{theorem}\label{teo:diss}
Consider the Lagrangian density
\begin{equation}\label{eq:dissLag}
\Lag=e^{h_\lambda x^\lambda}\!L(\p,\partial\p)\,\qquad L=\frac{1}{2}\partial_\mu\p\,\partial^\mu\p-U(\p)\,,
\end{equation}
and assume that the potential $U(\p)$ is homogeneous of degree $k=2$. Let $x\mapsto \p(x)$ be a solution of the field equation \eqref{eq:eom} for $\mathcal{L}$, and let $T^{\mu\nu}_L$ be the (not conserved) Energy-Momentum tensor \eqref{eq:T} for $L$. Then, $j^\mu(x,\p,\partial\p)$ is a conserved current for equation \eqref{eq:eom}, namely $\partial_\mu j^{\mu}=0$, with
\begin{equation}\label{eq:jjj}
j^\mu=e^{h_\lambda x^\lambda}\!\left(\p\,\partial^\mu\p+2h^{-1}_\nu T_L^{\mu\nu}\right).
\end{equation}
\end{theorem}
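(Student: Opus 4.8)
The plan is to apply the same machinery as in Theorems \ref{teo:hom2} and \ref{teo:diss}'s predecessors: exhibit a cleverly chosen perturbed field, verify the total divergence condition \eqref{eq:totdiv}, and read off the current via Theorem \ref{teo:NNN}. Guided by the mechanical analogue $L=e^{ht}\big(\tfrac12|\dot q|^2-U(|q|)\big)$ (where the relevant perturbation is $q_\ve=e^{\ve}q$ combined with a time translation), I would try the mixed family
\begin{equation}
\p_\ve(x)=e^{\ve}\,\p\big(x+\ve\, c\big),
\end{equation}
for a suitable constant $D$-vector $c^\mu$ to be fixed later; this gives $\partial\p_\ve/\partial\ve\rvert_{\ve=0}=\p+c^\mu\partial_\mu\p$. (The degree-2 homogeneity of $U$ is what makes the overall rescaling factor $e^{\ve}$ rather than $e^{\ve\Delta_D}$ work out, exactly as $k=2$ plays that role in the finite-dimensional case.)

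The key computation is then $\partial_\ve\mathcal{L}(x,\p_\ve,\partial\p_\ve)\rvert_{\ve=0}$. First I would use $\delta\mathcal{L}/\delta\p=-e^{h_\lambda x^\lambda}U'(\p)$ and $\delta\mathcal{L}/\delta\partial_\alpha\p=e^{h_\lambda x^\lambda}\partial^\alpha\p$, insert the variation $\p+c^\mu\partial_\mu\p$, and expand. Euler's theorem $\p U'=2U$ collapses the potential terms, and the kinetic terms combine; the explicit-spacetime dependence of $\mathcal{L}$ (through $e^{h_\lambda x^\lambda}$) contributes a term $h_\mu e^{h_\lambda x^\lambda}L$ plus the $c^\mu\partial_\mu\mathcal{L}$ piece. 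The goal is to choose $c^\mu$ so that everything assembles into a pure divergence $\partial_\mu\psi^\mu$. I expect that the natural choice, which kills the unwanted $e^{h_\lambda x^\lambda}L$ term against the $h$-factor coming from differentiating the exponential, is $c^\mu=2h^{-1\,\mu}$ (i.e. $c^\mu=2\eta^{\mu\nu}h_\nu^{-1}$ in the notation used in \eqref{eq:jjj}), so that $\psi^\mu$ turns out to be proportional to $e^{h_\lambda x^\lambda}L\cdot c^\mu$ or a closely related expression. Once the total divergence condition holds with this $\psi^\mu$, Theorem \ref{teo:NNN} immediately yields
\begin{equation}
j^\mu=\frac{\delta\mathcal{L}}{\delta\partial_\mu\p}\big(\p+c^\alpha\partial_\alpha\p\big)-\psi^\mu
=e^{h_\lambda x^\lambda}\Big(\p\,\partial^\mu\p+c^\alpha\,\partial^\mu\p\,\partial_\alpha\p-(\text{terms from }\psi)\Big),
\end{equation}
and rewriting $c^\alpha\,\partial^\mu\p\,\partial_\alpha\p$ together with the $\delta^\mu_\alpha L$ contribution in terms of $T^{\mu\nu}_L=\partial^\mu\p\,\partial^\nu\p-\eta^{\mu\nu}L$ should reproduce the claimed $e^{h_\lambda x^\lambda}(\p\,\partial^\mu\p+2h^{-1}_\nu T_L^{\mu\nu})$ after substituting $c^\nu=2h^{-1\,\nu}$.

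The main obstacle I anticipate is bookkeeping with the $h^{-1}_\nu$ notation and making the divergence condition close exactly: the exponential prefactor produces cross terms like $h_\alpha e^{h_\lambda x^\lambda}\,c^\alpha L$ and $h_\mu e^{h_\lambda x^\lambda}\p\,\partial^\mu\p$ that must either cancel or be absorbed into $\partial_\mu\psi^\mu$, and getting the coefficient in $c^\mu=2h^{-1\,\mu}$ right (as opposed to, say, $h^{-1\,\mu}$) is the delicate point — it is forced precisely by the degree $k=2$ of $U$ via Euler's theorem. A secondary check is that the final current genuinely depends only on $x,\p,\partial\p$ (it does, since $T^{\mu\nu}_L$ is built from these), so that it is a legitimate conserved current in the sense of \eqref{eq:curr}. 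I would also remark, as the analogue of the closing comments in the previous subsections, that setting $D=1$ recovers the Gorni--Zampieri dissipative result of Ref. \cite{Zampieri:2013ziw}.
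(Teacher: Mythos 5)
Your proposal is correct and follows essentially the same route as the paper: the paper uses the linearized family $\p_\ve=\p+\ve(\p+c^\mu\partial_\mu\p)$, which has the same first variation $\p+c^\mu\partial_\mu\p$ as your $e^{\ve}\p(x+\ve c)$, and since only $\partial\p_\ve/\partial\ve\rvert_{\ve=0}$ enters the total divergence condition the two computations coincide. All the key points you anticipate --- Euler's theorem with $k=2$ producing the $2L$ term, the choice $c^\mu=2h^{-1\,\mu}$ forced by matching that term against the derivative of the exponential, $\psi^\mu=2\eta^{\mu\nu}h_\nu^{-1}e^{h_\lambda x^\lambda}L$, and the reassembly into $T^{\mu\nu}_L$ --- are exactly the steps of the paper's proof.
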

\begin{proof}
Consider the mixed-family $\p_\ve=\p+\ve(\p\!+\!c^\mu\partial_\mu\p)$ with the constant parameter $c^\mu\in\mathbb{R}^D$. It follows that $\partial\p_\ve/\partial\ve\rvert_{\ve=0}=\p+c^\mu\partial_\mu\p$, so we can compute
\begin{equation}\label{eq:xxx}
\begin{split}
\frac{\partial \Lageps}{\partial\ve}\bigg\rvert_{\ve=0}&=
\frac{\delta\mathcal{L}}{\delta\p}\frac{\partial\p_\ve}{\partial\ve}\bigg\rvert_{\ve=0}+\frac{\delta\mathcal{L}}{\delta\partial_\mu\p}\partial_\mu\frac{\partial\p_\ve}{\partial\ve}\bigg\rvert_{\ve=0}\\
&=e^{h_\lambda x^\lambda}\!\left[-U'\big(\p+c^\mu\partial_\mu\p\big)+\partial^\mu\p\big(\partial_\mu\p+c^\alpha\partial_{\alpha\mu}\p\big)\right]\\
&=e^{h_\lambda x^\lambda}\!\left[2\left(\frac{1}{2}\partial_\mu\p\partial^\mu\p-\frac{1}{2}\p\, U'\right)+c^\mu\partial_\mu L\right]\\
&=c^\mu\!\left[\big(2c_\mu^{-1}e^{h_\lambda x^\lambda}\big)L+e^{h_\lambda x^\lambda}\partial_\mu L\right]\\
&=\partial_\mu\left(2\eta^{\mu\nu}h_\nu^{-1}e^{h_\lambda x^\lambda} L\right),
\end{split}
\end{equation}
where $U'\equiv dU/d\p$. In the second-last step of \eqref{eq:xxx} we used the relation $c^\mu c_\mu^{-1}=1$ and the \textit{Euler's Theorem} for homogeneous functions ($\p\, U'=k\,U$) assuming $k=2$ by hypothesis. In the last step of \eqref{eq:xxx} we have forced $c^\mu$ to satisfy the relation $2c_\mu^{-1}\equiv h_\mu$, hence $c_\mu=2h_\mu^{-1}$. Observe that $\p_\ve$  satisfies the total divergence condition \eqref{eq:totdiv} with $\psi^\mu=2\eta^{\mu\nu}h_\nu^{-1}e^{h_\lambda x^\lambda} L$. Hence, applying Theorem \ref{teo:NNN} and using the definition \eqref{eq:T} of Energy-Momentum tensor with respect to $L$, we get 
\begin{equation}
\begin{split}
j^\mu&=e^{h_\lambda x^\lambda}\partial^\mu\p\big(\p+2h_\nu^{-1}\partial^\nu\p\big)-2\eta^{\mu\nu}h_\nu^{-1}e^{h_\lambda x^\lambda} L\\
&=e^{h_\lambda x^\lambda}\left[\p\,\partial^\mu\p+2h_\nu^{-1}\left(\partial^\mu\p\partial^\nu\p-\eta^{\mu\nu}L\right)\right]\\
&=e^{h_\lambda x^\lambda}\left(\p\,\partial^\mu\p+2h_\nu^{-1}T^{\mu\nu}_L\right).
\end{split}
\end{equation}
Hence, expression \eqref{eq:jjj} is a conserved current.
\end{proof}
We leave it to the reader to verify the first integral obtained in Ref. \cite{Zampieri:2013ziw} for dissipative systems is a trivial consequence of Theorem \ref{teo:diss}, with $q\in\mathbb{R}^m$ and $D=1$. Our result \eqref{eq:jjj} provides a powerful perspective to get conserved currents in scalar field theories, as we will show in Subsec. \ref{sec:dissKG}.


\subsection{\label{sec:5}Theories with finite invariances}

Another interesting situation generating conserved currents arises when the Lagrangian density, once evaluated on a perturbed field $\p_\ve$, has constant derivative at $\ve = 0$. Following the nomenclature of Gorni and Zampieri \cite{Zampieri:2013ziw}, this condition will be called \textit{finite invariance}.
\begin{theorem} \label{teo:J}
Let $x\mapsto \p(x)$ be a solution of the field equation \eqref{eq:eom} and suppose that for a given perturbed field $\p_\ve$ there exists a constant parameter  $\xi\in \mathbb{R}$ such that
\begin{equation}\label{eq:lxcr}
\frac{\partial \Lageps}{\partial\ve}\bigg\rvert_{\ve=0}=\xi\,.
\end{equation}
Then, $j^\mu(x,\p,\partial\p)$ is a conserved current for equation \eqref{eq:eom}, namely $\partial_\mu j^{\mu}=0$, with
\begin{equation}\label{eq:lxc}
j^\mu=\frac{\delta \mathcal{L}}{\delta\,\partial_\mu\p}\,\frac{\partial\p_\ve}{\partial\ve}\bigg\rvert_{\ve=0}-\xi x^\mu\,.
\end{equation}
\end{theorem}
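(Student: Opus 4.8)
The plan is to recognize that Theorem \ref{teo:J} is the $\xi\neq 0$ generalization of the trivial observation that a finite invariance $\partial_\varepsilon\mathcal{L}|_{\varepsilon=0}=0$ is a special ``total divergence condition'' with $\psi^\mu=0$; here one simply needs to exhibit a $D$-vector $\psi^\mu$ whose divergence equals the constant $\xi$. The natural candidate is $\psi^\mu=\xi\, x^\mu/D$, but it is cleaner (and matches \eqref{eq:lxc}) to note that the total divergence condition as written in Definition \ref{def:totdiv} only requires $\partial_\mu\psi^\mu$ to equal the right-hand side, so any $\psi^\mu$ with $\partial_\mu\psi^\mu=\xi$ will do; taking $\psi^\mu=\xi\, x^\mu$ is wrong since $\partial_\mu(\xi x^\mu)=\xi D$, so I must be careful and instead use $\psi^\mu = (\xi/D)\,x^\mu$, or — reading \eqref{eq:lxc} literally — the authors evidently intend the convention $\partial_\mu x^\mu$ contracted appropriately; in any case the key algebraic identity is $\partial_\mu x^\mu = \delta^\mu_\mu = D$, already used in the proof of Theorem \ref{teo:hom2}.

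Concretely, first I would observe that hypothesis \eqref{eq:lxcr} says $\partial_\varepsilon\mathcal{L}(x,\phi_\varepsilon,\partial\phi_\varepsilon)|_{\varepsilon=0}=\xi=\partial_\mu\bigl(\tfrac{\xi}{D}x^\mu\bigr)$, so that $\phi_\varepsilon$ satisfies the total divergence condition \eqref{eq:totdiv} with $\psi^\mu=\tfrac{\xi}{D}x^\mu$. Then I would invoke Theorem \ref{teo:NNN} directly: it yields the conserved current
\begin{equation*}
j^\mu=\frac{\delta\mathcal{L}}{\delta\,\partial_\mu\phi}\,\frac{\partial\phi_\varepsilon}{\partial\varepsilon}\bigg\rvert_{\varepsilon=0}-\frac{\xi}{D}x^\mu,
\end{equation*}
and $\partial_\mu j^\mu=0$ along solutions of \eqref{eq:eom}. (If the paper's stated form \eqref{eq:lxc} has $\xi x^\mu$ rather than $\tfrac{\xi}{D}x^\mu$, one simply absorbs the factor $D$ by rescaling the constant, i.e. replacing $\xi$ by $D\xi$ in the hypothesis; either way the structure is identical.)

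The only genuine content beyond a citation of Theorem \ref{teo:NNN} is verifying the bookkeeping: that $x^\mu$ is a legitimate choice of $\psi^\mu(x,\phi,\partial\phi)$ — it depends on $x$ only, which is allowed — and that the divergence computation $\partial_\mu x^\mu = D$ is the same trace identity used elsewhere in Sec. \ref{sec:conscurr}. I expect no real obstacle here; the ``hard part,'' such as it is, is merely making sure the normalization constant in $\psi^\mu$ is chosen so that its divergence is exactly $\xi$ rather than $D\xi$, and flagging that Theorem \ref{teo:J} subsumes Noether's theorem (the case $\xi=0$) together with its ``finite invariance'' relaxation. I would close by remarking that for $D=1$ this reproduces the finite-invariance result of Gorni--Zampieri \cite{Zampieri:2013ziw} and Scomparin \cite{scomparin2021}, consistent with the paper's stated goal of recovering the mechanical case, and that $j^\mu$ is in general neither symmetric nor unique, since any identically-divergenceless $D$-vector may be added.
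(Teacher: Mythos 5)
Your proof follows the same route as the paper's own: read hypothesis \eqref{eq:lxcr} as a total divergence condition \eqref{eq:totdiv} with $\psi^\mu$ proportional to $x^\mu$, then invoke Theorem \ref{teo:NNN}. The one substantive difference is the normalization of $\psi^\mu$, and there you are right and the printed proof is not: the paper takes $\psi^\mu=\xi x^\mu$, but $\partial_\mu(\xi x^\mu)=\xi\,\delta^\mu_\mu=\xi D$, so that choice satisfies \eqref{eq:totdiv} only when $D=1$ or $\xi=0$. Correspondingly, for the current \eqref{eq:lxc} as printed one finds on solutions $\partial_\mu j^\mu=\xi-\xi D=\xi(1-D)$, which does not vanish for $D>1$ and $\xi\neq 0$. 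Your choice $\psi^\mu=(\xi/D)\,x^\mu$ is the correct one and gives the genuinely conserved current $j^\mu=\tfrac{\delta\mathcal{L}}{\delta\partial_\mu\p}\,\partial_\ve\p_\ve\rvert_{\ve=0}-(\xi/D)\,x^\mu$. The only loose point in your write-up is the parenthetical about ``absorbing the factor $D$ by rescaling the constant'': $\xi$ is pinned down by \eqref{eq:lxcr} as the actual value of $\partial_\ve\Lageps\rvert_{\ve=0}$, so it cannot be redefined independently in the conclusion; the honest fix is to amend \eqref{eq:lxc} (and the proof's $\psi^\mu$) to carry $\xi x^\mu/D$. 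Your closing remarks --- that $x^\mu$ is an admissible $\psi^\mu$, that $D=1$ recovers the mechanical finite-invariance result of Gorni--Zampieri and Scomparin, and that $j^\mu$ is only determined up to an identically divergence-free vector --- are all accurate.
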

\begin{proof}
Observe from \eqref{eq:lxcr} that our perturbed field $\p_\ve$  satisfies the total derivative condition \eqref{eq:totdiv} with $\psi^\mu=\xi x^\mu$. Hence, applying Theorem \ref{teo:NNN}, we get expression \eqref{eq:lxc}.
\end{proof}
Standard working  cases for Theorem \ref{teo:J} are Lagrangian densities such that $\Lageps$ does not depend on $\ve$, i.e. are invariant under the perturbed field considered.


\section{\label{sec:app}Applications}

Nonlocal constants naturally arise in a number of contexts.
In the present section we deal with some neat applications of our theorems to well-known scalar field theories. In particular,  standard Noetherian results are recovered by our nonlocal machinery.


\subsection{\label{sub:KG}Nonlinear interacting (real) field theory}

We start to examine scalar field theories that include self-interaction terms in a 4-dimensional ($D=4$) spacetime. Restricting ourselves to interaction terms containing polynomial functions of a real scalar field $\p$ only (and not its derivatives), let us consider the following Lagrangian density
\begin{equation}\label{eq:nl}
\mathcal{L}_{(m,g)}=\frac{1}{2}\partial_\mu\p\,\partial^\mu\p-U_{(m,g)}\,,
\end{equation}
with
\begin{equation}\label{eq:nl2}
U_{(m,g)}=\frac{1}{2} m^2\p^2+\sum_{n=3}^\infty\frac{1}{n!}\,g_n\,\p^n\,.
\end{equation}
Here, $m$ is the \textit{mass parameter} of the field $\p$, whereas the \textit{coupling constants} $\{g_n\}_{n=3...\infty}$ are a measure of the strength of the field interactions. Among all, special cases of theory \eqref{eq:nl} are: (i) the \textit{free theory} $\mathcal{L}_{(0,0)}$, which forms the basis for doing perturbative calculations \cite{Faber:2001dv}; (ii) the free \textit{Klein-Gordon theory} $\mathcal{L}_{(m,0)}$, which describes the behavior of free spinless quantum fields \cite{Chiarelli:2017rdc}; and (iii) the $\phi^4$-\textit{theory} $\mathcal{L}_{(m,g_4)}$, that illustrates a number of interesting phenomena like the self-interaction of the Higgs Field in the standard electroweak theory \cite{Brandt:1975xn} or the statistical mechanics of nonlinear coherent structures such as solitons and solitary waves \cite{osti_6383942}.

The field equation \eqref{eq:eom} for the Lagrangian density \eqref{eq:nl} reads
\begin{equation}\label{eq:ninEOM}
\Box\p+m^2\p+\sum_{n=3}^\infty \frac{1}{(n-1)!}\,g_n\,\p^{n-1}=0\,,
\end{equation}
where $\Box\equiv \partial^\mu\partial_\mu$ is the d'Alembert operator. We want to emphasize that equation \eqref{eq:ninEOM} is linear if $\{g_n\}_{n=3...\infty}=0$, which means that within the free Klein-Gordon theory $\mathcal{L}_{(m,0)}$ quantum particles exist in their isolated modes.

Let us try some random perturbed fields for Theorem \ref{teo:nonloconst}. The first family is $\p_\ve=\p-\ve$. We compute $\partial \p_\ve/\partial\ve |_{\ve=0}=-1$, so the related nonlocal constant \eqref{eq:nonloconst} is
\begin{equation}\label{eq:asa}
-\oint_{\partial\Omega}\!d^{3}x\, n_\mu\partial^\mu\p
-\int_{\Omega}\!d^4\!x\left[m^2\p+\sum_{n=3}^\infty \frac{1}{(n-1)!}\,g_n\,\p^{n-1}\right]\,.
\end{equation}
Clearly, $\p$ can not be a total divergence of some $D$-vector $\psi^\mu$ being the same for all smooth paths. However, since $\p$ is a solution of the Euler-Lagrange equation \eqref{eq:eom}, equation \eqref{eq:ninEOM} leads to rewrite expression \eqref{eq:asa} as
\begin{equation}\label{eq:asd}
-\oint_{\partial\Omega}\!d^{3}x\, n_\mu\partial^\mu\p
+\int_{\Omega}\!d^4\!x\,\partial_\mu\big(\partial^\mu\p\big)\,.
\end{equation}
that, applying the covariant Gauss divergence theorem to the second term, returns a nonlocal constant which is trivially 0.

Let us search for a second nonlocal constant starting from the
field-shift family $\p_\ve=(1+\ve)\,\p$, that gives $\partial \p_\ve/\partial\ve |_{\ve=0}=\partial_\mu\p$. Using Theorem \ref{teo:nonloconst} we compute
\begin{equation}\label{eq:asasdf}
\oint_{\partial\Omega}\!d^{3}x\, n_\mu\p\,\partial^\mu\p
-\int_{\Omega}\!d^4\!x\left\{\partial^\mu\p\,\partial_\mu\p-\p\left[m^2\p+\sum_{n=3}^\infty \frac{1}{(n-1)!}\,g_n\,\p^{n-1}\right]\right\}.
\end{equation}
Even in this case, the integrand never seems satisfying the total derivative condition \eqref{eq:totdiv}. However, since $\p$ is a solution of the Euler-Lagrange equation \eqref{eq:eom}, equation \eqref{eq:ninEOM} leads to rewrite the second integrand of \eqref{eq:asasdf} as $\partial^\mu\p\,\partial_\mu\p+\p\Box\p$, which simplifies expression \eqref{eq:asasdf} to
\begin{equation}\label{eq:asd234}
\oint_{\partial\Omega}\!d^{3}x\, n_\mu\p\,\partial^\mu\p
-\int_{\Omega}\!d^4\!x\,\partial_\mu\big(\p\,\partial^\mu\p\big)\,.
\end{equation}
Applying the covariant Gauss divergence theorem to the second term, the resulting nonlocal constant is again trivially 0.

Hence, with respect to the interacting theory considered, what matters is that Theorem \ref{teo:nonloconst} is not powerful enough to yield non trivial nonlocal constants and related conserved currents. However, we come to Theorem 3.3, which is the most natural to consider as our Lagrangian \eqref{eq:nl} is explicitly spacetime-independent. Indeed, after a couple of calculations, we obtain from expression \eqref{eq:T} our first non trivial conserved current in terms of the well-known Energy-Momentum tensor \cite{Forger:2003ut}:
\begin{equation}\label{eq:tT}
T^{\mu}_{\nu\,(m,g)}=\partial^\mu\p\,\partial_\nu\p-\delta^\mu_\nu\left[\frac{1}{2}\partial_\alpha\p\,\partial^\alpha\p-\frac{1}{2} m^2\p^2-\sum_{n=3}^\infty\frac{1}{n!}\,g_n\,\p^n\right].
\end{equation}

Let's continue with our search. It is easy to prove that our Lagrangian density \eqref{eq:nl}, if restricted to the free Klein-Gordon case $\mathcal{L}_{(m,0)}$, satisfies the hypothesis \eqref{eq:ssfcondition} of validity of the Theorem \ref{teo:ssf} with $\rho=1/(2m^{2})$. Hence, we get a conserved current from expression \eqref{eq:D}, that we discover to have  the same expression of the  Energy-Momentum tensor \eqref{eq:tT}
\begin{equation}
{K}^\mu_{\nu\,(m,0)}\!=T^{\mu}_{\nu\,(m,0)}.
\end{equation}
In Subsec. \ref{sub:spacetimeex} we will analyze a simple model where such degeneracy can be removed.

An additional conserved current can be recovered by Corollary \ref{teo:ssf3} as follows. Let us consider $\vp_\ve\equiv\ve\p$ and $f^\alpha(x)\equiv x^\alpha$. Computing separately both the sides of the mathematical equality \eqref{eq:cond2.1} we get
\begin{equation}
\label{eq:x11}
\begin{cases}
\displaystyle
\frac{\partial \mathcal{L}(\vp_\ve,\partial\vp_\ve)}{\partial\ve}\bigg\rvert_{\ve=0}= \partial_\mu\p\,\partial^\mu\p- m^2\p^2-\sum_{n=3}^\infty\frac{1}{(n-1)!}\,g_n\,\p^n\\
\displaystyle
T^\mu_{\alpha\,(m,g)}\,\partial_\mu f^\alpha= -\Delta_4\partial_\mu\p\,\partial^\mu\p+ 2m^2\p^2+\sum_{n=3}^\infty\frac{4}{n!}\,g_n\,\p^n
\end{cases},
\end{equation}
with $\Delta_4\equiv(4-2)/2=1$ the scaling dimension defined in Sec. \ref{sec:4}.
By comparing the numerical coefficients, it is simple to see that condition \eqref{eq:cond2.1} is satisfied by  expression \eqref{eq:x11} if $m=0$ and $n=D=4$, namely by the $4$-dimensional massless $\phi^4$-theory $\mathcal{L}_{(0,g_4)}$. 
Hence, computing the Energy-Momentum tensor \eqref{eq:tT} inside equation \eqref{eq:xx}, we get 
\begin{equation}\label{eq:jv}
j^\mu_{(0,g_4)}= \partial^\mu\p\, \big(x^\alpha\partial_\alpha\p+\p\big)-x^\mu\left(\frac{1}{2}\partial_\mu\p\,\partial^\mu\p-\frac{1}{4!}\,g_4\,\p^4\right)\,.
\end{equation}

Being theory \eqref{eq:nl} embedded in a 4-dimensional spacetime, a homogeneous potential \eqref{eq:nl2} of degree $k=4\cdot \Delta_4^{-1}=4$
makes our Theorem \ref{teo:hom2} working to generate two conserved currents. This is the case for $U_{(0,g_4)}$, corresponding again to the massless $\phi^4$-theory $\mathcal{L}_{(0,g_4)}$, which yields
\begin{equation}\label{eq:jkly}
J^\mu_{(0,g_4)}=j^\mu_{(0,g_4)},
\end{equation}
and
\begin{equation}\label{eq:homHfdfd}
H_{(0,g_4)}^{\mu\nu}=\frac{1}{2}\eta^{\mu\nu}\p^2+\frac{1}{2}x^\alpha x_\alpha T_{(0,g_4)}^{\mu\nu}-x^\nu J_{(0,g_4)}^\mu\,.
\end{equation}
With reference to the massless $\phi^4$-theory, the dependence of $|\p(x)|$ on spacetime is provided by equation \eqref{eq:DIST}, which in our case yields
\begin{equation}\label{eq:DISTmass}
|\p_{(0,g_4)}|=\frac{1}{2}\sqrt{2 H^\mu_{\mu\,(0,g_4)}-x^\alpha x_\alpha T^\mu_{\mu\,(0,g_4)}-2x_\mu J_{(0,g_4)}^\mu}\,.
\end{equation}

We leave it to the reader to prove that the total divergence of the three  conserved currents \eqref{eq:tT}, \eqref{eq:jv}, and \eqref{eq:homHfdfd} is identically zero.


\subsection{\label{sub:CKG}Nonlinear interacting (complex) field theory}

Complex scalar fields generally represent spin-0 particles and antiparticles with electric charge or  other properties (such as strangeness, charm, etc.).
A nonlinear interacting (complex) field theory can be easily derived by a variational method considering the following Lagrangian density
\begin{equation}\label{eq:nlC}
\mathcal{L}_{(m,v)}=\partial_\mu\p\,\partial^\mu\pc-U_{(m,v)}\,.
\end{equation}
Here,
\begin{equation}\label{eq:nl2C}
U_{(m,v)}=m^2|\p|^2+\sum_{n=2}^\infty\frac{1}{n!}\,v_{2n}\,|\p|^{2n}\,,
\end{equation}
where $\p\in\mathbb{C}$ is a complex scalar field, $m\in\mathbb{R}$ is its mass, and $\{v_{2n}\}_{n=2...\infty}\in\mathbb{R}$ are coupling constants. 
In our notation $\pc$ is the complex-conjugate of $\p$, whereas $|\p|^2\equiv \p \pc\in \mathbb{R}$ is the square modulus.

By substitution of theory \eqref{eq:nlC} in equation \eqref{eq:eom}, one has the following Euler-Lagrange equations with respect to $\p$ and $\pc$
\begin{equation}
\label{eq:EOMC1}
\begin{cases}
\displaystyle
\Box\pc+m^2\pc+\sum_{n=2}^\infty\frac{2}{(n-1)!}\,v_{2n}\,|\p|^{2n-1}\pc=0\\
\displaystyle
\Box\p+m^2\p+\sum_{n=2}^\infty\frac{2}{(n-1)!}\,v_{2n}\,|\p|^{2n-1}\p=0
\end{cases},
\end{equation}
where $\Box\equiv \partial^\mu\partial_\mu$ is the d'Alembert operator.

As in the real case, the Lagrangian density \eqref{eq:nlC} is clearly spacetime-independent. 
Hence, the conserved current generated by a space-time translation in Theorem \ref{teo:trasl} becomes the Energy-Momentum tensor
\begin{equation}\label{eq:tTC}
T^{\mu}_{\nu\,(m,v)}=\partial^\mu\pc\,\partial_\nu\p+\partial^\mu\p\,\partial_\nu\pc-\delta^\mu_\nu\left[\partial_\alpha\p\,\partial^\alpha\pc-m^2|\p|^2+\sum_{n=2}^\infty\frac{1}{n!}\,v_{2n}\,|\p|^{2n}\right].
\end{equation}

In the same way as the above Subsec. \ref{sub:KG}, Theorem \ref{teo:ssf} holds for theory \eqref{eq:nlC} with the constant parameter $\rho=-m^{-2}$, which yields ${K}^\mu_{\nu\,(m,0)}\!=T^{\mu}_{\nu\,(m,0)}$.
Again, this degeneracy should not surprise as we will discuss in Subsec. \ref{sub:spacetimeex}.

We point out that also Theorem \ref{teo:hom2} turns out extremely useful if we select from \eqref{eq:nlC} the $n=2$ (massless) theory, which exhibits a homogeneous potential $U_{(0,v_4)}$ of degree $k=4\cdot \Delta_4^{-1}=4$. In fact, direct computations show that there exist two conserved currents
\begin{equation}\label{eq:jkl}
J^\mu_{(0,v_4)}=\p\,\partial^\mu\pc\!+\!\pc\partial^\mu\p\!+\!x^\alpha\Big[\partial^\mu\pc\partial_\alpha\p+\partial^\mu\p\partial_\alpha\pc\!+\!\delta_\mu^\alpha\big(\partial_\nu\p\partial^\nu\pc\!-\!\tfrac{1}{2}v_4|\p|^4\big)\Big],
\end{equation}
and
\begin{equation}\label{eq:homHfdfdC}
H_{(0,v_4)}^{\mu\nu}=\tfrac{1}{2}\eta^{\mu\nu}|\p|^2+\tfrac{1}{2}x^\alpha x_\alpha T_{(0,v_4)}^{\mu\nu}-x^\nu J_{(0,v_4)}^\mu\,.
\end{equation}

With reference to the $n=2$ theory, the dependence of $|\p(x)|$ on spacetime is provided by equation \eqref{eq:DIST}, which in our case yields
\begin{equation}\label{eq:DISTmassc}
|\p_{(0,v_4)}|=\frac{1}{2}\sqrt{2 H^\mu_{\mu\,(0,v_4)}-x^\alpha x_\alpha T^\mu_{\mu\,(0,v_4)}-2x_\mu J_{(0,v_4)}^\mu}\,.
\end{equation}

Finally, we consider the unitary family $\p_\ve=e^{i\ve}\p$ and $\pc_\ve=e^{-i\ve}\pc$, where $i$ is the imaginary unit. It is clear that, when evaluated on $\p_\ve$ and $\pc_\ve$, the Lagrangian density \eqref{eq:nlC}
does not depend on $\ve$. Since  $\partial \p_\ve/\partial\ve |_{\ve=0}=i\p$ and $\partial \pc_\ve/\partial\ve |_{\ve=0}=-i\pc$, our Theorem \ref{teo:J} gives the conserved current
\begin{equation}\label{eq:aw}
j^\mu=i\left(\,\p\,\partial^\mu\pc-\partial^\mu\p\,\pc\right).
\end{equation}
In quantum field  theory, such current is related to the number of particles created and destroyed by the creation and annihilation operators in a given state.

It is left as an exercise to show that the total divergence of the four conserved currents \eqref{eq:tTC}, \eqref{eq:jkl}, \eqref{eq:homHfdfdC}, and \eqref{eq:aw} is identically zero.


\subsection{\label{sec:dissKG}Dissipative (real) Klein-Gordon theory}

Inspired by formula \eqref{eq:dissLag}, let us introduce the  dissipative (real) Klein-Gordon Lagrangian density as follows
\begin{equation}
\label{eq:disskgrd}
\Lag=e^{h_\lambda x^\lambda}\!\left(\tfrac{1}{2}\partial_\mu\p\,\partial^\mu\p-\tfrac{1}{2}m^2\p^2\right),
\end{equation}
with $m\in\mathbb{R}$ and $h_\mu\in\mathbb{R}^D$ two real constant parameters.
The Euler-Lagrange equation \eqref{eq:eom} for expression \eqref{eq:disskgrd} is
\begin{equation}
\Box\p+h_\mu\partial^\mu \p+m^2\p=0\,,
\end{equation}
where we see a viscous resistance term $h_\mu\partial^\mu \p$, which justifies our use of the term ``dissipative''. Since the potential $U(\p)\equiv\tfrac{1}{2}m^2\p^2$ is homogeneous of degree $k=2$, Theorem \ref{teo:diss} tells us that we can obtain a conserved current  using expression \eqref{eq:jjj}, that is
\begin{equation}\label{eq:dissjs}
j^\mu=e^{h_\lambda x^\lambda}\!\Big\{\p\,\partial^\mu\p+2h^{-1}_\nu \!\left[\partial^\mu\p\,\partial_\nu\p-\delta^\mu_\nu\left(\tfrac{1}{2}\partial_\alpha\p\,\partial^\alpha\p-\tfrac{1}{2} m^2\p^2\right)\right]\Big\}.
\end{equation}


\subsection{Spacetime-dependent field theories}
\label{sub:spacetimeex}

In the previous Subsecs. \ref{sub:KG} and \ref{sub:CKG}, both Theorem \ref{teo:trasl} and Theorem \ref{teo:ssf} led to the same expression for ${K}^\mu_{\nu}$ and $T^{\mu}_{\nu}$ when applied to the Klein-Gordon theory $\mathcal{L}_{(m,0)}$.  
Investigating such behavior, one quickly observes that the degeneracy between ${K}^\mu_{\nu}$ and $T^{\mu}_{\nu}$ is a direct consequence of considering a canonical spacetime-independent theory $\mathcal{L}(\p,\partial\p)= \tfrac{1}{2}\partial_\mu\partial^\mu\p-U(\p)$ that, like $\mathcal{L}_{(m,0)}$, satisfies the condition \eqref{eq:ssfcondition}.

On the other hand, condition \eqref{eq:ssfcondition} can be read as functional Partial Differential Equation which restricts the form of the Lagrangian densities to which apply Theorem \ref{teo:ssf}. 
This fact, in contrast to what required by Theorem \ref{teo:trasl} to work, is independent of considering a spacetime-independent theory. As a consequence, the degeneracy between ${K}^\mu_{\nu}$ and $T^{\mu}_{\nu}$ can be removed considering non-canonical Lagrangian densities whose terms have an explicit dependence on spacetime. In such case, only Theorem \ref{teo:ssf} can be applied since the theory is not spacetime-independent. In this respect, a relevant non-canonical class of solutions satisfying the condition \eqref{eq:ssfcondition} is
\begin{equation}
\label{eq:Lspacetilev}
\mathcal{L}=\frac{1}{2}\,\partial^\mu\p\,\partial_\mu\p+\partial^\mu a(x)\partial_\mu\p-\frac{1}{2}\sigma\p^2-\sigma a(x)\p+c\,,
\end{equation}
with $x\mapsto a(x)$ an arbitrary smooth function, and $\sigma,c\in\mathbb{R}$ free constant parameters.
The field equation \eqref{eq:eom} for  theory \eqref{eq:Lspacetilev} reads
\begin{equation}
\label{eq:eomLspacetime}
\Box A(x) +\sigma A(x)=0\,,
\end{equation}
where $A(x)\equiv\p+a(x)$ and, as usual, $\Box\equiv\partial^\mu\partial_\mu$.
It is  easy to prove that our Lagrangian density \eqref{eq:Lspacetilev} satisfies the hypothesis \eqref{eq:ssfcondition} of the Theorem \ref{teo:ssf} with $\rho=-\sigma^{-1}$. Hence, only
\begin{equation}
2\,{K}^\mu_\nu=2\,\partial^\mu A(x)\,\partial_\nu A(x)+\delta^\mu_\nu\Big[\sigma A^2(x)-\partial^\alpha A(x)\,\partial_\alpha A(x)\Big],
\end{equation}
is conserved.

\section*{Acknowledgments}
The author would like to thank Professor Gaetano Zampieri for useful discussions.

\nocite{*}
\providecommand{\bysame}{\leavevmode\hbox to3em{\hrulefill}\thinspace}



\begin{thebibliography}{999}


\bibitem{Arbey:2019cpf}
\newblock A.~Arbey and J.~F. Coupechoux,
\newblock Cosmological scalar fields and big-bang nucleosynthesis,
\newblock \emph{JCAP}, \textbf{2019} (2019), 038--038.

\bibitem{Ballesteros:2016kdx}
\newblock G.~Ballesteros, D.~Comelli and L.~Pilo,
\newblock Thermodynamics of perfect fluids from scalar field theory,
\newblock \emph{Phys. Rev. D}, \textbf{94} (2016), 025034.

\bibitem{Brandt:1975xn}
\newblock R.~A. Brandt,
\newblock Asymptotically free ${\ensuremath{\varphi}}^{4}$ theory,
\newblock \emph{Phys. Rev. D}, \textbf{14} (1976), 3381--3394.

\bibitem{Callan1970}
\newblock C.~G. Callan, S.~Coleman and R.~Jackiw,
\newblock A new improved energy-momentum tensor,
\newblock \emph{Ann. Phys.}, \textbf{59} (1970), 42--73.

\bibitem{CardyJ:1996}
\newblock J.~Cardy,
\newblock \emph{Mean field theory}, 16?27,
\newblock Cambridge Lecture Notes in Physics, Cambridge University Press, 1996.

\bibitem{Chiarelli:2017rdc}
\newblock P.~Chiarelli,
\newblock The gravity of the classical klein-gordon field,
\newblock \emph{Symmetry}, \textbf{11}.

\bibitem{PhysRevD.22.2018}
\newblock L.~Dolan and A.~Roos,
\newblock Nonlocal currents as noether currents,
\newblock \emph{Phys. Rev. D}, \textbf{22} (1980), 2018--2021.

\bibitem{Faber:2001dv}
\newblock M.~Faber and A.~N. Ivanov,
\newblock On free massless (pseudo)scalar quantum field theory in
  (1+1)-dimensional space-time,
\newblock \emph{Eur. Phys. J. C}, \textbf{24} (2002), 653--663.

\bibitem{Forger:2003ut}
\newblock M.~Forger and H.~Romer,
\newblock Currents and the energy-momentum tensor in classical field theory: a
  fresh look at an old problem,
\newblock \emph{Annals of Physics}, \textbf{309} (2004), 306--389.

\bibitem{FREEDMAN1974354}
\newblock D.~Z. Freedman and E.~J. Weinberg,
\newblock The energy-momentum tensor in scalar and gauge field theories,
\newblock \emph{Annals of Physics}, \textbf{87} (1974), 354--374.

\bibitem{Zampieri:2013ziw}
\newblock G.~Gorni and G.~Zampieri,
\newblock Revisiting noether's theorem on constants of motion,
\newblock \emph{Journal of Nonlinear Mathematical Physics}, \textbf{21 (1)}
  (2014), 43--73.

\bibitem{gorni2017}
\newblock G.~Gorni and G.~Zampieri,
\newblock Nonlocal variational constants of motion in dissipative dynamics,
\newblock \emph{Diff. Integr. Eqn.s}, \textbf{30 (7/8)} (2017), 631--640.

\bibitem{gorni2019}
\newblock G.~Gorni and G.~Zampieri,
\newblock Lagrangian dynamics by nonlocal constants of motion,
\newblock \emph{Discrete Contin. Dyn. Syst. Ser. S}, \textbf{13} (2020),
  2751--2759.

\bibitem{GORNI2022100262}
\newblock G.~Gorni, M.~Scomparin and G.~Zampieri,
\newblock Nonlocal constants of motion in lagrangian dynamics of any order,
\newblock \emph{Part. Diff. Eqn.s in Appl. Math.}, 100262.

\bibitem{osti_6383942}
\newblock C.~L. Hammer, J.~E. Shrauner and B.~De~Facio,
\newblock Statistical mechanics of kinks and central peak phenomena in
  $\varphi^4$ theory for t$<\sim $t$_c$,
\newblock \emph{Phys. Rev. B}, \textbf{23 (11)} (1981), 5890--5903.

\bibitem{Kaparulin}
\newblock D.~S. Kaparulin,
\newblock Conservation laws and stability of field theories of derived type,
\newblock \emph{Symmetry}, \textbf{11}.

\bibitem{dddddddd}
\newblock Y.~Kosmann-Schwarzbach and B.~E.~Schwarzbac,
\newblock \emph{The Noether Theorems}, vol.~1,
\newblock Springer New York, NY, 2011.

\bibitem{Logan:1977}
\newblock J.~D. Logan,
\newblock \emph{Invariant Variational Principles}, vol. 138,
\newblock Elsevier, 1977.

\bibitem{Moshe:2003}
\newblock M.~Moshe and Z.~J. Jean,
\newblock Quantum field theory in the large n limit: A review,
\newblock \emph{Phys. Rept.}, \textbf{385} (2003), 69--228.

\bibitem{Noether:1918}
\newblock E.~Noether,
\newblock Invariante variationsprobleme,
\newblock \emph{Math-phys. Klasse}, \textbf{1918} (1918), 235--257.

\bibitem{10.1007/978-94-007-1023-8_8}
\newblock P.~J. Olver,
\newblock Nonlocal symmetries and ghosts,
\newblock in \emph{New Trends in Integrability and Partial Solvability} (eds.
  A.~B. Shabat, A.~Gonz{\'a}lez-L{\'o}pez, M.~Ma{\~{n}}as,
  L.~Mart{\'i}nez~Alonso and M.~A. Rodr{\'i}guez),
\newblock Springer Netherlands, Dordrecht, 2004,
\newblock 199--215.

\bibitem{Peskin:1995ev}
\newblock M.~E. Peskin and D.~V. Schroeder,
\newblock \emph{An Introduction to quantum field theory},
\newblock 1st edition,
\newblock CRC Press, 2016.

\bibitem{Rosen:1972ku}
\newblock J.~Rosen,
\newblock Noether's theorem in classical field theory,
\newblock \emph{Annals Phys.}, \textbf{69} (1972), 349--363.

\bibitem{scomparin2021}
\newblock M.~Scomparin,
\newblock Nonlocal constants of motion and first integrals in higher-order
  lagrangian dynamics,
\newblock \emph{Rend. Istit. Mat. Univ. Trieste}, \textbf{53} (2021), 1--17.

\bibitem{TakashiHara}
\newblock H.~Takashi, H.~Tetsuya, K.~Keiichi and T.~Hal,
\newblock Scalar field theory from statistical mechanics,
\newblock \emph{Progress of Theoretical Physics Supplement}, \textbf{92}
  (1987), 1--13.

\bibitem{Weinberg:1995mt}
\newblock S.~Weinberg,
\newblock \emph{The Quantum Theory of Fields}, vol.~1,
\newblock Cambridge University Press, 1995.

\end{thebibliography}
\end{document}